\documentclass[a4wide]{article}

\usepackage{macros}
\usepackage{datetime}

\allowdisplaybreaks[4]

\usepackage{parskip}
\usepackage{hyperref}

\usepackage[a4paper]{geometry}

\newcommand{\eod}{\hfill\ensuremath{\diamond}}

\newtheoremstyle{myplain}
{5pt}
{5pt}
{}
{}
{\bf}
{.}
{\newline}
{}

\theoremstyle{myplain}
\newtheorem{theorem}{Theorem}[section]
\newtheorem{lemma}[theorem]{Lemma}

\newtheorem{corollary}[theorem]{Corollary}

\newtheorem{definition}[theorem]{Definition}

\newtheorem{example}[theorem]{Example}

\begin{document}
\title{An Extension of Parikh's Theorem beyond Idempotence\thanks{This work was partially funded by DFG project ``Polynomielle Systeme {\"u}ber Semiringen: Grundlagen, Algorithmen, Anwendungen''}}

\author{Michael Luttenberger and Maxmilian Schlund\thanks{Institut f{\"u}r Informatik, Technische Universit{\"a}t M{\"u}nchen}\\ \texttt{\{luttenbe,schlund\}@model.in.tum.de}}

\maketitle

\begin{abstract}
The {\em commutative ambiguity} $\camb_{G,X}$ of a context-free grammar $G$ with start symbol $X$ assigns to each Parikh vector $\vv$ the number of distinct leftmost derivations yielding a word with Parikh vector $\vv$. Based on the results on the generalization of Newton's method to $\omega$-continuous semirings \cite{EKL07:stacs,EKL07:dlt,DBLP:journals/jacm/EsparzaKL10}, we show how to approximate $\camb_{G,X}$ by means
of rational formal power series, and give a lower bound on the convergence speed of these approximations. From the latter result we deduce that $\camb_{G,X}$ itself is 
rational modulo the generalized idempotence identity $k=k+1$ (for $k$ some positive integer), and, subsequently, that it can be represented as a weighted sum of linear sets. This extends Parikh's well-known result that the commutative image of context-free languages is semilinear ($k=1$).

Based on the well-known relationship between context-free grammars and algebraic systems over semirings \cite{ChomskySch1963,DBLP:books/daglib/0067812,DBLP:journals/tcs/BerstelR82,Kui,DBLP:journals/mst/Bozapalidis99},
our results extend the work by Green et al.\ \cite{DBLP:conf/pods/GreenKT07} on the computation of the provenance of Datalog queries over commutative $\omega$-continuous semirings.
\end{abstract}

\section{Introduction}

\newcommand{\coloneq}{\ensuremath{\mathrel{\mathop:}-}}

\paragraph*{Motivation}
Recently, Green et al.\ showed in \cite{DBLP:conf/pods/GreenKT07} that several questions regarding the provenance of an answer to a Datalog query 
\footnote{See e.g.\ \cite{DBLP:journals/tkde/CeriGT89} for more details on Datalog.}
reduce to computing the least solution of an algebraic system over a $\omega$-continuous commutative semiring. 
To illustrate the main idea, consider the following Datalog program that computes the transitive closure of a finite directed graph ${\cal G}=(V,E)$:
\begin{equation*}
\begin{array}{lcl}
  \text{trans}(X,Y) & \coloneq & \text{edge}(X,Y).\\
  \text{trans}(X,Y) & \coloneq & \text{trans}(X,Z), \text{trans}(Z,Y).\\
\end{array}
\end{equation*}
Here, $X,Y,Z$ are variables ranging over the nodes $V$ of the graph, the interpretation of the (extensional) predicate $\text{edge}(X,Y)$ is given by the edge relation $E$ of ${\cal G}$,
while the interpretation of the (intensional) predicate $\text{trans}(X,Y)$ is implicitly given by the least Herbrand model, i.e.\ the transitive closure of ${\cal G}$.
In order to deduce which edges of ${\cal G}$
give rise to a positive answer to the query $?-\text{trans}(u,v).$, in \cite{DBLP:conf/pods/GreenKT07} the authors assign to each positive literal a unique identifier
 -- for instance, let $\al = \{ e_{u,v} \mid (u,v) \in E\}$ and $\vars = \{ X_{u,v} \mid u,v\in V\}$ -- 
and then expands the above query into an abstract algebraic system in the formal parameters $\al$ and the variables $\vars$:
\begin{equation*}
 X_{u,w} = \left\{ \begin{array}{ll} e_{u,w} + \sum_{v\in V} X_{u,v} X_{v,w} & \text{ if } (u,w) \in E\\
                                        \sum_{v\in V} X_{u,v} X_{v,w} & \text{ otherwise } \end{array}\right.
\end{equation*}
In order to give a meaning to this system, the right-hand side is interpreted over some semiring $\cg{S,+,\cdot,0,1}$, short $S$,
i.e.\ the abstract addition and multiplication are interpreted as the addition and multiplication in $S$, and each formal parameter $a\in\al$ is interpreted as an element $h(a)\in S$ by
means of a valuation $h\colon \al\to S$.
As is well-known \cite{Kui}, each algebraic system has a least solution if $S$ is $\omega$-continuous (see Section \ref{sec:pre}).

We demonstrate the connection between the Datalog program and the algebraic system by means of two examples. First,
the transitive closure itself is essentially the least solution over the Boolean semiring $\cg{\{0,1\},\vee,\wedge,0,1}$ under the valuation $h(e_{u,w})=1$ for all $e_{u,w}\in\al$, i.e.\
the least solution assigns $1$ to $X_{u,w}$ if and only if $(u,w)$ is in the transitive closure.
For a somewhat more interesting example,
assume we want to analyze {\em why} an edge $(u,w)$ is included in the transitive closure.
To this end, it suffices to represent a path by the set of its edges, and a set of paths
by the set of corresponding sets of edges. This leads naturally to the semiring $\cg{2^{2^{\al}},\cup, \Cup , \emptyset, \{\emptyset\}}$: a semiring element is a set of subsets of edge identifiers, two semiring elements $s_1,s_2$ are added by taken their union $s_1\cup s_2$, while the (commutative) multiplication is defined by $s_1 \Cup s_2 = \{ a_1 \cup a_2 \mid a_1 \in s_1, a_2 \in s_2\}$.  Again, we obtain the answer to our question by computing the least solution of above system over this semiring under the valuation $h(e_{u,w})=\{\{e_{u,w}\}\}$.
For further examples, we refer the reader to \cite{DBLP:conf/pods/GreenKT07}.

Note that in both examples, multiplication is commutative, and addition is idempotent. 
Naturally, the question arises over which commutative $\omega$-continuous semirings
we can compute or, at least, approximate the least solution of an algebraic system. Of particular interest is the semiring of {\em formal power series} whose carrier is the set $\Next\aab{\N^\al}$ of functions from Parikh vectors $\N^\al$ to the extended natural numbers $\Next=\N\cup\{\infty\}$, as it is free in the following sense:
every valuation $h\colon \al \to S$ into a concrete commutative $\omega$-continuous semiring induces a unique $\omega$-continuous homomorphism $H\colon \Next\aab{\al^\ast}\to S$
 which maps the least solution over $\Next\aab{\N^\al}$ to the least solution over $S$ (we do not distinguish between $h$ and $H$ in the following). See e.g.\ \cite{DBLP:journals/mst/Bozapalidis99,DBLP:conf/pods/GreenKT07}. 

In general, a finite, explicit representation of the least solution $(\mf{s}_X\mid X\in\vars)$ over $\Next\aab{\N^\al}$ is not possible (see also Example \ref{ex1}).
In \cite{DBLP:conf/pods/GreenKT07} the authors therefore present two algorithms {\em All-Trees} and {\em Monomial-Coefficient} for computing finitely representable information on this solution: {\em All-Trees} decides whether $\mf{s}_X\colon \N^\al \to \Next$ 
has only finite support and takes only finite values on its support, and can be used to evaluate Datalog over finite distributive lattices, a special case of commutative $\omega$-continuous semirings; {\em Monomial-Coefficient} computes the value of $\mf{s}_X$ for some Parikh vector $\vv\in\N^\al$.
Both algorithms are based on the close relationship between algebraic systems and context-free grammars \cite{ChomskySch1963,DBLP:books/daglib/0067812,Kui,cfgHFL,DBLP:journals/jcss/Thatcher67,DBLP:journals/tcs/BerstelR82,DBLP:journals/mst/Bozapalidis99,EKL07:stacs,EKL07:dlt,EKL08:dlt},
and work by enumerating the derivation trees of the grammar associated with the algebraic system utilizing the pumping lemma for context-free languages in order to ensure termination.
The associated context-free grammar $G=(\vars,\al,P)$ with nonterminals $\vars$, alphabet $\al$, and productions $P$ is obtained from the algebraic system by reinterpreting the right-hand sides of the algebraic system as rewriting rules for the variables. For instance, the algebraic system for computing the transitive closure translates to the grammar $G$ defined by the rules
\begin{equation*}
X_{u,w} \to X_{u,v} X_{v,w}\ \text{ for all $u,v,w\in V$, and }\ X_{u,w} \to e_{u,w}\ \text{ for all $(u,w)\in E$.}
\end{equation*}
W.r.t.\ commutative $\omega$-continuous semirings, 
the grammar $G$ and the algebraic system are then connected by means of the commutative ambiguity $\camb_{G,X}\colon \N^\al \to \Next$ which assigns to each Parikh vector $\vv\in\N^\al$ the number of leftmost derivations w.r.t.\ $G$ with start symbol $X$ leading to a word with Parikh vector $\vv$: we have that $\mf{s}_X = \camb_{G,X}$ for all $X\in\vars$, or short $\mf{s} = \camb_G$. See e.g.\ \cite{ChomskySch1963,DBLP:journals/mst/Bozapalidis99,EKL07:stacs}.

\paragraph*{Contribution and related work} 
In this article, we study how to construct from a given context-free grammar $G$ a sequence $G^{\dle{0}}, G^{\dle{1}}, \ldots$ of {\em nonexpansive} context-free grammars $G^{\dle{i}}$ that underapproximate the ambiguity of $G$ ($\amb_{G^{\dle{i}},X}(w) \le \amb_{G,X}(w)$ for all $w\in\al^\ast$, Lemma \ref{lem:unf}), and, thus, also the commutative ambiguity.\footnote{A context-free grammar is nonexpansive  if every variable $X$ derives only sentential forms containing $X$ at most once \cite{GinsburgSpanier68:deriv-bounded}.} As $G^{\dle{i}}$ is nonexpansive, it is straightforward to show that $\camb_{G^{\dle{i}},X}$ is rational in $\Next\aab{\N^\al}$,
and a rational expression representing $\camb_{G^{\dle{i}},X}$ can easily be computed from $G^{\dle{k}}$ (Theorem \ref{thm:nonexp}). We then give a lower bound on the speed at which $\camb_{G^{\dle{i}},X}$ converges to $\camb_{G}$: letting $n$ be the number of variables of $G$,
we show that 
for every positive integer $k$ and every $\vv\in\N^\al$ we have that, if $\camb_{G{^{\dle{n+k}}},X}(\vv) \neq \camb_{G,X}(\vv)$, then at least $2^{2^k}\le \camb_{G^{\dle{n+k}},X}(\vv)$  (Theorem \ref{thm:parikh-ext}).

An immediate consequence of these results is an algorithm for evaluating Datalog queries over ``collapsed'' commutative semirings: call a $\omega$-continuous semiring $S$ collapsed at some positive integer $k$ if in $S$ the identity $k=k+1$ holds;\footnote{Where $k$ denotes the term $1+\ldots+1$ consisting of the corresponding number of $1$s. For instance, any $\omega$-continuous idempotent semiring is ``collapsed'' at $1$. See also \cite{DBLP:journals/iandc/BloomE09} for a much more general discussion of these semirings.} given a valuation $h\colon \al \to S$ into a commutative $\omega$-continuous semiring collapsed at $k$, the least solution can be obtained by evaluating the corresponding rational expressions for $\camb_{G^{\dle{n + \log\log k}}}$ under the homomorphism induced by $h$.
 
In particular, this yields an algorithm for evaluating Datalog queries over the tropical semiring $\cg{\Next,\min,+,0,\infty}$; this answers an open question of \cite{DBLP:conf/pods/GreenKT07}. We remark that in \cite{EKL08:dlt} more efficient algorithms for the classes of star-distributive semirings, subsuming the tropical semiring, and of one-bounded semirings, subsuming finite distributive lattices, are presented.

Finally, we show that $\camb_{G,X}$ can be represented modulo $k=k+1$ as a finite sum $\gamma_1 \vecn{1}_{C_1} + \ldots + \gamma_r \vecn{1}_{C_r}$ of weighted characteristic functions $\vecn{1}_C$ of linear sets $C\subseteq\N^\al$ with weights $\gamma_i\in\{0,1,\ldots,k\}$ (Theorem \ref{thm:semilin}).\footnote{$C\subseteq\N^\al$ is linear if $C=\{ \vv_0 + \sum_{i=1}^s \lambda_i \vv_i \mid \lambda_1,\ldots,\lambda_s\in\N\}$ for vectors $\vv_0,\ldots,\vv_s\in \N^\al$.} This completes the extension of Parikh's well-known theorem that the commutative image of a context-free grammar is a semilinear set ($k=1$).

These results continue the study of Newton's method over $\omega$-continuous semirings presented in \cite{EKL07:stacs,EKL07:dlt,DBLP:journals/jacm/EsparzaKL10}. There it was shown that Newton's method, as known from calculus, also applies to the setting of algebraic systems over $\omega$-continuous semirings,
and converges always to the least solution at least as fast as (and many times much faster than) the standard fixed-point iteration.
Although it is shown in \cite{EKL07:dlt,DBLP:journals/jacm/EsparzaKL10} that Newton's method is well-defined on any $\omega$-continuous semiring,
the definition does not yield an effective way of applying Newton's method as it requires the user to supply at each iteration a semiring element which
represents a certain difference. 
Only for special cases it is stated how to compute those differences,
but a general construction is missing in these articles.

The grammars $G^{\dle{k}}$ defined in Definition \ref{def:unf} address this shortcoming. Their construction is based on the notion of ``tree dimension'' 
introduced in \cite{EKL07:stacs} to characterize the structure of terms evaluated by Newton's method, where it was shown that the $k$-th Newton approximation of the least solution of an algebraic system
corresponds exactly to the derivation trees of dimension at most $k$ generated by the context-free grammar associated with the system.
This allows us to explicitly define a grammar, resp.\ equation system, which captures exactly the update computed by Newton's method
within a single step. That is, we may define the difference of two consecutive Newton approximations over {\em any} $\omega$-continuous semiring by constructing a grammar which generates exactly the derivation trees of $G$ of dimension exactly $k$.
By taking the sum of all these updates, we obtain the grammar, $G^{\dle{k}}$ which generates exactly the derivation trees of $G$ of dimension at most $k$.
Hence, if the least solution of (the equation system associated with) $G^{\dle{k-1}}$ is known, we only need to solve the equation system corresponding to
the derivation trees of dimension exactly $k$.
We remark that this construction does not require multiplication to be commutative; it is merely a partition of the regular tree language of derivation trees of $G$.

If multiplication is commutative, $\camb_{G^{\dle{k}}}$ represents the $k$-th Newton approximation over any commutative $\omega$-continuous semiring. Similarly, the bound on the speed at which $\camb_{G^{\dle{k}}}$ converges to $\camb_G$ given in Theorem \ref{thm:parikh-ext} generalizes the result of \cite{EKL07:stacs} on the convergence of Newton's method over idempotent commutative $\omega$-continuous semirings.

If multiplication is not commutative, we may not represent the least solution of $G^{\dle{k}}$ as regular expressions, but only as regular tree expressions with the particular property
that tree substitution only occurs at a unique leaf. It might be worthwhile to study
if there are interesting (distributive) abstract interpretations whose widening operator can take advantage of this representation.

\paragraph*{Structure of the paper}
In Section \ref{sec:pre} we recall the most fundamental definitions, in particular the definition of the dimension of a tree.
We then show in Section \ref{sec:unf} how to unfold a given context-free grammar $G$ into a new context-free grammar $G^{\dle{k}}$
that generates exactly those derivation trees of $G$ that are of dimension at most $k$ and, thus, represents exactly the $k$-th Newton approximation.
We show that the commutative ambiguity of each grammar $G^{\dle{k}}$ is rational over $\Next\aab{\N^\al}$.
In Section \ref{sec:conv} we give a lower bound on the speed at which the ambiguity of $G^{\dle{k}}$ converges to that of $G$.
We use this result in Section~\ref{sec:semilinear} to obtain from a rational expression for $\camb_{G^{\dle{k}}}$ a semilinear representation of $\camb_{G}$ modulo the generalized idempotence assumption of $k=k+1$, thereby completing the extension of Parikh's theorem from $k=1$ to arbitrary $k$. 

All proofs can be found in the appendix.

\section{Preliminaries}\label{sec:pre}
The power set of a set $M$ is denoted by $2^M$. For $k\in\N$, set $[k] := \{1,2,\ldots,k\}$ with $[0]=\emptyset$.
The 
natural numbers extended by a greatest element $\infty$, and the natural numbers ``collapsed'' at a given positive integer $k$ are denoted by 
$\Next$, and $\N_k = \{0,1,\ldots,k\}$, respectively. For $a\in\Next$ set $a+\infty = \infty$,
$0\cdot \infty = 0$ and $a\cdot \infty = \infty$ if $a\neq 0$. Addition and multiplication are defined on $\N_k$ by 
identifying $k$ with $\infty$.

The set of words over the (finite) alphabet $\al$ is denoted by $\al^\ast$ with $\ew=()$ the empty word.
The length of a word $w\in\al^\ast$ is denoted by $\abs{w}$. The {\em Parikh map} is $\pa\colon \al^\ast \to \N^\al \colon w \mapsto (\pa_a(w) \mid a\in\al)$ 
where $\pa_a(w)$ denotes the number of occurrences of $a$ in $w$.

Let $\sig$ be finite ranked set (signature) where $\sig_r$ denotes the subset of $\sig$ consisting of exactly those symbols having arity $r$.
Then $\trs$ denotes the set of $\sig$-terms where we use Polish notation so that $\trs\subseteq \sig^\ast$.
When $t\in\trs$, we denote by $t=\sigma t_1\ldots t_r$ that $\sigma\in\sig_r$ and $t_1,\ldots,t_r\in\trs$ are the uniquely determined subterms;
for inductive definitions, we set $t=\sigma t_1\ldots t_r = \sigma$ if $r=0$.
$\trs$ is canonically identified with the set of finite, $\sig$-labeled, rooted trees: the rooted tree underlying $t=\sigma t_1\ldots t_r$ has as nodes the set $V_t = \{ \ew \} \cup \{ i \pi \mid i\in [r],\ \pi\in V_{t_i} \}$ with $\ew$ the root,
and the edges $E_t := \{ (\pi,\pi i) \mid \pi i \in V_t\}$ pointing away from the root. 
The label $\lbl_t(\cdot)$ of a node in $V_t$ is then defined inductively by $\lbl_{t}(\ew) = \sigma$ 
and $\lbl_{t}(i\pi) = \lbl_{t_i}(\pi)$ for $t=\sigma t_1\ldots t_r$. 
The height $\hgt(t)$ of a tree $t=\sigma t_1\ldots t_r$ is defined to be $0$ if $r=0$, and otherwise by $\hgt(t)=\max_{i\in[r]} \hgt(t_i)$.
Analogously, define the subtree $t|_{\pi}$ of $t$ rooted at $\pi$, and the tree $t[t'/\pi]$ obtained by substituting the tree $t'$ for $t|_{\pi}$ inside of $t$.

\begin{definition}
The {\em dimension} $\dim(t)$ of $t=\sigma t_1\ldots t_r\in\trs$ is defined
to be $\dim(t) = 0$ if $r=0$; otherwise let $d = \max_{i\in[r]} \dim(t_i)$, and set $\dim(t)=d$ if there is a unique child $i\in[r]$ of dimension $d$, 
else set $\dim(t) = d+1$.
\eod
\end{definition}

From the definition it easily follows that $\dim(t)$ is the height of the greatest perfect binary tree that can be obtained from the rooted tree $(V_t,E_t)$ via edge contractions.
Thus, $\dim(t)$ is bounded from above by $\hgt(t)$. 

\begin{example}
Assume $\sig=\{a,b\}$ with $a\in\sig_2$ and $b\in\sig_0$. Then $aabbaabbb\in\trs$ is identified with the tree
\begin{center}
\begin{tikzpicture}
\node (e) at (0,0) {$\ew \colon a$};
\node (0) at (-2,-1) {$1\colon a$};
\node (1) at (2,-1) {$2\colon a$};
\node (00) at (-3,-2) {$11\colon b$};
\node (01) at (-1,-2) {$12\colon b$};
\node (10) at (1,-2) {$21\colon a$};
\node (11) at (3,-2) {$22\colon b$};
\node (100) at (0.4,-3) {$211\colon b$};
\node (101) at (1.6,-3) {$212\colon b$};

\draw[->] (e) -> (0);
\draw[->] (e) -> (1);
\draw[->] (0) -> (00);
\draw[->] (0) -> (01);
\draw[->] (1) -> (10);
\draw[->] (1) -> (11);
\draw[->] (10) -> (100);
\draw[->] (10) -> (101);
\end{tikzpicture}
\end{center}
For instance, the node $212$ is labeled by $b$. Computing the dimension bottom-up, we obtain $\dim(t|_{21})=1$, $\dim(t|_2)=1$, $\dim(t|_1)=1$, and $\dim(t)=2$.
\end{example}

The tree dimension $\dim(t)$ is also known as Horton-Strahler number \cite{Horton45,Strahler52}, or the register number \cite{DBLP:journals/cacm/Ershov58,DBLP:conf/focs/FlajoletFV79,DBLP:journals/ipl/DevroyeK95}, and is closely related to the pathwidth \cite{DBLP:journals/jct/RobertsonS83} $\pw(T)$ of the tree $T=(V_t,E_t)$ underlying $t$: it can be shown that $\pw(T) - 1 \le \dim(t) \le 2\pw(T)+1$. 

\paragraph*{Semirings} 
We recall the basic results on semirings (see e.g.\ to \cite{Kui,KuiHWA}).
A {\em semiring} $\cg{S,+,\cdot,0,1}$ consists of a commutative additive monoid $\cg{S,+,0}$ and a multiplicative monoid $\cg{S,\cdot,1}$ where multiplication distributes over addition from both left and right, and multiplication by $0$ always evaluates to $0$. We simply write $S$ for $\cg{S,+,\cdot,0,1}$ if the signature is clear from the context.
$S$ is {\em commutative} if its multiplication is commutative.
$S$ is {\em naturally ordered} if the relation $a \no b$ defined by $a\no b :\Leftrightarrow \exists d\in S\colon a+d = b$ is a partial order on $S$; then $0$ is the least element. 

A partial order $\cg{P,\le}$ is {\em $\omega$-continuous} if for every monotonically increasing sequence ({\em $\omega$-chain}) $(a_i)_{i\in\N}$, i.e.\ $a_i\le a_{i+1}$ for all $i\in\N$, the supremum $\sup_{i\in\N} a_i$ exists in $\cg{P,\le}$; a function $f\colon \cg{P,\le} \to \cg{P,\le}$ is called {\em $\omega$-continuous} if for every $\omega$-chain $(a_i)_{i\in\N}$ we have $f(\sup_{i\in\N} a_i) = \sup_{i\in\N} f(a_i)$. We say that $S$ is {\em $\omega$-continuous} if $\cg{S,\no}$ is $\omega$-continuous, and addition and multiplication are both $\omega$-continuous in every argument.
In any $\omega$-continuous semiring finite summation $\sum$ can be extended to countable sequences and families by means of $\sum_{i\in\N} a_i := \sup_{k\in\N} \sum_{i\in[k]} a_i$.
The {\em Kleene star} ${}^\ast\colon S \to S$ is defined by $a^\ast := \sum_{i\in\N} a^i$. 

If not stated otherwise, we always assume that $\Next$ carries the semiring structure $\cg{\Next,+,\cdot,0,1}$ with 
addition and multiplication as stated above so that $1^\ast = \infty$. 
For any $\omega$-continuous semiring $S$ there is exactly one $\omega$-continuous homomorphism $h$ from $\Next$ to $S$ as $h(0)=0$, $h(1)=1$, and $h(\infty)=h(1^\ast)=1^\ast$ have to hold; we therefore embed $\Next$ into $S$ by means of this unique homomorphism.

For a commutative semiring $\cg{S,+,\cdot,0,1}$, and a {\em finitely decomposable}\footnote{A monoid $\cg{M,\circ,e}$ is {\em finitely decomposable} if for every $m\in M$ there exists only finitely many pairs $(u,v)\in M^2$ 
that $u\circ v = m$. This ensures that the Cauchy product is also well-defined over semirings $S$ which are not $\omega$-continuous.} monoid $\cg{M,\circ,e}$  we recall the  definition of the {\em semiring $S\aab{M}$ of formal power series}. Its carrier is the set of total functions from $M$ to $S$. For $\mf{s}\in S\aab{M}$  denote by $(\mf{s},m)$ the value of $\mf{s}$ at $m\in M$. Then addition on $S$ is extended pointwise to $S\aab{M}$, while multiplication is defined by means of the generalized Cauchy product, i.e.:
\begin{equation*}
(\mf{s}+\mf{t},m) = (\mf{s},m) + (\mf{t},m)\ \text{ and }\ (\mf{s}\cdot\mf{t},m) = \sum_{u,v\in M\colon u\circ v = m} (\mf{s},u) \cdot (\mf{t},v).
\end{equation*}
That is, we treat $\mf{s}\in S\aab{M}$ as a (formal) power series $\sum_{m\in M} (\mf{s},m) m$ with $(\mf{s},m)$ the coefficient of the monomial $m$. 
If the support $\supp(\mf{s})=\{ m\in M\mid (\mf{s},m) \neq 0\}$ is finite, then $\mf{s}$ is called a (formal) polynomial. The subset of polynomials is denoted by $S\ab{M}$.
The semiring $S$ and the monoid $M$ are canonically embedded into $S\aab{M}$ by means of the  monomorphisms
$h_S\colon S \mapsto S\aab{M} \colon s \mapsto s e$  and $h_M \colon M \mapsto S\aab{M}\colon m \mapsto 1 m$, respectively.
W.r.t.\ these definitions $S\aab{M}$ and $S\ab{M}$ become semirings with neutral elements $\mf{0}=h_S(0)$ and $\mf{1}=h_S(1)=h_M(e)$; if $S$ is  $\omega$-continuous, then so is $S\aab{M}$, and the Kleene star is defined everywhere on $S\aab{M}$.
For instance, $S\aab{M}$ is $\omega$-continuous for $S$ either $\Next$ or $\N_k$, and $M$ either $\al^\ast$ or $\N^\al$;
but $\N\aab{\al^\ast}$ and $\N\aab{\N^\al}$ are not.
Note that $\Next\aab{\al^\ast}$ is free in the following sense:
let $\cg{S,+,\cdot,0_S,1_S}$ be some $\omega$-continuous semiring;
then every valuation $h\colon \al \to S$ extends uniquely to a $\omega$-continuous homomorphism $h\colon\Next\aab{\al^\ast}\to S$ defined by $h(\mf{s})=\sum_{w\in\al^\ast} (\mf{s},a) h(a)$. Similarly, $\Next\aab{\N^\al}$ is a representation of the free commutative $\omega$-continuous semiring generated by $\al$, and, thus, isomorphic to $\Next\aab{\al^\ast}$ modulo commutativity. 

Let $S$ be commutative and $\omega$-continuous so that the Kleene star is defined for every power series in $S\aab{M}$.
A power series $\mf{s}\in S\aab{M}$ is called {\em rational}, if it can be constructed from the elements of $S$ and $M$ by means of the rational operations addition, multiplication, and Kleene star, i.e.\ if either $\mf{r} \in S$, or $\mf{r} \in M$, or $\mf{r} = (\mf{r}_1+\mf{r}_2)$, or $\mf{r} = \mf{r}_1 \cdot \mf{r}_2$, or
$\mf{r} = \mf{r}_1^\ast$
for $\mf{r}_1,\mf{r}_2$ rational in $S\aab{M}$. A {\em rational expression (over $M$ with weights in $S$)} is any term constructed from elements of $S$ and $M$, and the rational operations. For every rational series $\mf{r}$ in $S\aab{M}$ there is a rational expression $\rho$ which evaluates to $\mf{r}$ over $S\aab{M}$.
By our assumption that $S$ is $\omega$-continuous, also every rational expression evaluates to a rational series $\mf{r}$ over $S\aab{M}$.
Note that $\omega$-continuous homomorphisms preserve rationality.

\paragraph*{Context-free grammars}
A context-free grammar $G=(\vars,\al,P)$ consists of variables $\vars$, an alphabet $\al$, and rules $P\subseteq \vars\times (\al\cup\vars)^\ast$. By $(G,X)$ we denote the grammar $G$ with start symbol $X\in\vars$.
For a rule $(X,\gamma)\in P$ we also write $X\to_G \gamma$ or simply $X\to \gamma$ if $G$ is apparent from the context.
$\Rightarrow_G$ denotes the binary relation on $(\al\cup\vars)^\ast$ induced by the rules $P$, i.e.,
if $X\to_G w$, then $\alpha X \beta \Rightarrow_G \alpha w \beta$ for all $\alpha,\beta\in (\al\cup\vars)^\ast$.
The (reflexive) transitive closure of $\Rightarrow_G$ is denoted by ($\Rightarrow_G^\ast$) $\Rightarrow_G^+$.
The language generated by $(G,X)$ is $L(G,X) = \{ w\in \al^\ast \mid X \Rightarrow_G^\ast w \}$.

Let $\sig_G$ denote the set $\{ \sigma_{X,\gamma} \mid X\to_G \gamma\}$
and define the arity of $\sigma_{X,\gamma}$ to be the number of variables occurring in $\gamma$.
Define the new context-free grammar $G_{\tr}$ with alphabet $\sig_G$ by setting 
$X\to_{G_{\tr}} \sigma_{X,\gamma} X_1 \ldots X_r \text{ for } \gamma = \gamma_0 X_1 \gamma_1 \ldots \gamma_{r-1} X_r \gamma_r$. 
Then $\tr_{G,X}:=L(G_{\tr},X)\subseteq \tr_{\sig_G}$ is called the set of $(G,X)$-trees (or simply $X$-trees if $G$ is apparent from the context)
and $\tr_{G,X}$ ``yields'' $L(G,X)$ in the sense of \cite{DBLP:journals/jcss/Thatcher67,DBLP:journals/tcs/BerstelR82,DBLP:journals/mst/Bozapalidis99,EKL07:stacs}:
The word represented by a tree $t\in\tr_{\sig_G}$ is called its yield $\Y(t)$ and is inductively defined by $Y(t) = u_0 Y(t_1) u_1 \ldots u_{r-1} Y(t_{r}) u_r$ for $t=\sigma_{X,\gamma}t_1 \ldots t_r$ and $\gamma= u_0 X_1 u_1 \ldots u_{r-1} X_{r} u_r$.
We then have $L(G,X) = \{ \Y(t) \mid t\in \tr_{G,X} \}$, and
\begin{equation*}
  \amb_{G,X}(w) = \abs{ \{ t\in\tr_{G,X} \mid \Y(t) = w \}}\ \text{ and }\ \camb_{G,X}(\vv) = \abs{ \{ t\in \tr_{G,X} \mid \pa(\Y(t)) = \vv \}}.
\end{equation*}
where $\amb_{G,X}\in\Next\aab{\al^\ast}$, $\camb_{G,X}\in\Next\aab{\N^\al}$ and $L(G,X)= \supp(\amb_{G,X}) \in \N_1\aab{\al^\ast}$.

The dimension of a derivation tree is closely related to the index of a derivation.

\begin{definition}[see e.g.\ \cite{GinsburgSpanier68:deriv-bounded}]
The {\em index} of a derivation is the maximal number of variables occurring in any sentential form of the derivation.\eod
\end{definition}

\begin{definition}
For $G$ a context-free grammar and $t\in\trsg$, let $\minidx(t)$ be the minimum index taken over all derivations associated with $t$.\eod
\end{definition}

\begin{lemma}[\cite{EKL07:dlt,EGKL10:parikh}]\label{lem:index}
Let $G$ be a context-free grammar and $r_{\max}$ the maximal arity of a symbol in $\sig_G$. Then:
$\dim(t) < \minidx(t) \le \dim(t) \cdot ( r_{\max} - 1) + 1$.
\eod
\end{lemma}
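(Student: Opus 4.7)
I start by making precise the correspondence between derivations associated with a tree $t$ and linear extensions $v_1,\ldots,v_{|V_t|}$ of the parent-before-child order on $V_t$. Writing $F_s\subseteq V_t$ for the \emph{frontier} before step $s$, i.e.\ the unprocessed nodes whose parent has been processed (with $F_1=\{\ew\}$), the sentential form at step $s$ contains exactly $|F_s|$ variables, so $\minidx(t)=\min_\sigma\max_s|F_s|$ over all such extensions $\sigma$. The key observation is that for any subtree $t|_\pi$, restricting $\sigma$ to $V_{t|_\pi}$ yields a valid derivation of $t|_\pi$, and $|F_s|\ge|F_s\cap V_{t|_\pi}|$ at every step. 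Both inequalities are proved by induction on $\dim(t)$; the base case $\dim(t)=0$ is a single-step derivation with $\minidx(t)=1$, which satisfies $\dim(t)<1$ and $1\le 0\cdot(r_{\max}-1)+1$.

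\textbf{Lower bound $\dim(t)<\minidx(t)$.} Let $d=\dim(t)\ge 1$. If the root of $t$ has a unique child $t_i$ of maximum dimension $d$, then any derivation of $t$ restricts to a derivation of $t_i$ which by induction reaches sub-frontier $\ge d+1$ at some step, forcing the total frontier at that step to be $\ge d+1$. If instead at least two distinct children $t_i,t_j$ both attain the maximum children-dimension $d-1$, fix any derivation of $t$ and let $s_i,s_j$ be the first steps at which $|F_s\cap V_{t_i}|$, resp.\ $|F_s\cap V_{t_j}|$, reach the inductive value $d$. WLOG $s_i\le s_j$. At step $s_i$ the subtree $t_j$ cannot be fully processed, for otherwise $s_j$ would lie strictly before $s_i$; hence at least one node of $V_{t_j}$ remains in $F_{s_i}$ (the still-unexpanded root of $t_j$ if $t_j$ has not been started yet, or else a frontier descendant of it), and together with the $d$ nodes from $V_{t_i}$ this gives $|F_{s_i}|\ge d+1$.

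\textbf{Upper bound $\minidx(t)\le d(r_{\max}-1)+1$.} I construct a derivation meeting this bound by expanding the root and then processing the children sequentially in increasing dimension $d_{i_1}\le\cdots\le d_{i_r}$, each via an inductively optimal sub-derivation. While sub-derivation $j$ is active the global frontier contains at most $\minidx(t_{i_j})\le d_{i_j}(r_{\max}-1)+1$ variables inside $V_{t_{i_j}}$ plus $r-j$ still-untouched siblings, so the frontier is bounded by $d_{i_j}(r_{\max}-1)+1+(r-j)$. For $j=r$ this is $d_{i_r}(r_{\max}-1)+1\le d(r_{\max}-1)+1$. For $j<r$, the two sub-cases of the dimension definition both yield $d_{i_j}\le d-1$ (in the unique-max case only $t_{i_r}$ reaches $d$, and in the other case no child reaches $d$); since $r-j\le r-1\le r_{\max}-1$, one concludes $d_{i_j}(r_{\max}-1)+1+(r-j)\le d(r_{\max}-1)+1$. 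The main subtlety, and the only place a non-trivial combinatorial argument is needed, is the temporal bookkeeping in case (b) of the lower bound: establishing that the two maximum-dimension children cannot both have been completely processed before either of them has peaked.
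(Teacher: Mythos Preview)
The paper does not actually prove this lemma; it is quoted from \cite{EKL07:dlt,EGKL10:parikh} with no proof given, so there is nothing in the paper to compare against. Evaluating your argument on its own merits: the overall strategy (frontier bookkeeping for linear extensions, pigeonholing two heavy subtrees for the lower bound, processing children by increasing dimension for the upper bound) is the standard one and is essentially correct.

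There is, however, a genuine flaw in your induction scheme. You declare ``induction on $\dim(t)$'', but in Case~1 of the lower bound --- the root has a \emph{unique} child $t_i$ of maximum dimension $d$ --- you then apply the inductive hypothesis to $t_i$. By definition of dimension this child satisfies $\dim(t_i)=d=\dim(t)$, so the induction parameter has not decreased and the hypothesis is not available. The identical problem arises in the upper bound at $j=r$ when $d_{i_r}=d$. The repair is trivial: run the induction on $|V_t|$ (or on height) rather than on $\dim(t)$; every proper subtree is strictly smaller, so the hypothesis applies in all your cases, and the remainder of your argument goes through verbatim.

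Two smaller remarks. First, your base case ``$\dim(t)=0$ is a single-step derivation'' is not quite right: $\dim(t)=0$ characterizes chains of arbitrary length, not single nodes; but the conclusion $\minidx(t)=1$ still holds for chains. Second, in Case~2 you should also note the degenerate sub-case $s_i=s_j$, where both sub-frontiers simultaneously reach $d$, giving $|F_{s_i}|\ge 2d\ge d+1$ directly; your phrasing implicitly assumes $s_i<s_j$.
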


\begin{example}
Consider $G$ defined by the productions:
\begin{equation*}
X\to YaYaY\quad Y\to X\quad Y\to b.
\end{equation*}
Then $\sig_G = \{ \sigma_{X,XXX}, \sigma_{X,Y}, \sigma_{Y,a}\}$. The leftmost derivation 
\begin{equation*}
X\Rightarrow YaYaY \Rightarrow XaYaY \Rightarrow YaYaYaYaY \Rightarrow^+ babababab
\end{equation*}
has index $5$, and corresponds to the derivation tree 
\begin{equation*}
t = \sigma_{X,YaYaY}\ \sigma_{Y,X}\ \sigma_{X,YaYaY}\ \sigma_{Y,b}\ \sigma_{Y,b}\ \sigma_{Y,b}\ \sigma_{Y,b}\ \sigma_{Y,b}
\end{equation*}
depicted as
\begin{center}
\begin{tikzpicture}
\node (e) at (0,0) {$\ew\colon \sigma_{X,YaYaY}$};
\node (1) at (-2,-1) {$1\colon \sigma_{Y,X}$};
\node (2) at (0,-1) {$2\colon \sigma_{Y,b}$};
\node (3) at (2,-1) {$3\colon \sigma_{Y,b}$};
\node (11) at (-2,-2) {$11\colon \sigma_{X,YaYaY}$};
\node (111) at (-3.5,-3) {$111\colon \sigma_{Y,b}$};
\node (112) at (-2,-3) {$112\colon \sigma_{Y,b}$};
\node (113) at (-0.5,-3) {$113\colon \sigma_{Y,b}$};

\draw[->] (e) -> (1);
\draw[->] (e) -> (2);
\draw[->] (e) -> (3);
\draw[->] (1) -> (11);
\draw[->] (11) -> (111);
\draw[->] (11) -> (112);
\draw[->] (11) -> (113);
\end{tikzpicture}
\end{center}
This tree has dimension $1$. A derivation of minimal index first processes the subtree $t|_{2}$ and $t|_{3}$ leading to an index of $3$.
\end{example}

\section{Unfolding}\label{sec:unf}
In this section, we describe how to unfold a given context-free grammar $G=(\vars,\al,P)$ into a new context-free grammar $G^{\dle{k}}$ which generates exactly the trees of dimension at most $k$ (Definition \ref{def:unf} and Lemma \ref{lem:unf}). Hence, $\amb_{G^{\dle{k}}} \le \amb_{G}$. 
By construction, $G^{\dle{k}}$ is {\em nonexpansive}, i.e.\ every variable $X$ can only be derived into sentential forms in which $X$ occurs at most once \cite{GinsburgSpanier68:deriv-bounded,Yntema:inclusion-relations}.
From this, it easily follows that the commutative ambiguity $\camb_{G^{\dle{k}}}$ is a rational power series in $\Next\aab{\N^\al}$ (Theorem \ref{thm:nonexp}).

We first give an informal description of the notation used in the definiton of $G^{\dle{k}}$:
given the bound $k$ on the maximal dimension we split every variable $X\in\vars$ of $G$ into the variables $X^\deq{d}$ and $X^{\dle{d}}$, where $d\in\{0,1,\ldots,k\}$, with the intended meaning that $X^{\deq{d}}$ resp.\ $X^\dle{d}$ generates all $G_X$-trees of dimension exactly resp.\ at most $d$;
a variable $X^{\dle{d}}$ can only be rewritten to $X^{\deq{d'}}$ for some $d'\le d$, i.e.\ nondeterministically the dimension of the tree to be generated from $X^{\dle{d}}$ has to be chosen; the rules rewriting the variable $X^{\deq{d}}$ 
are derived from the rules $X\to_G \gamma$ by replacing each variable $Y$ occurring in $\gamma$ by either $Y^\deq{d'}$ or $Y^{\dle{d'}}$ 
for some $d'\le d$ in such a way that, inductively, it is guaranteed that every $X$-tree of dimension exactly $d$ is generated exactly once.
In particular, as for each $X$-tree $t=\sigma t_1 \ldots t_r$  there is at most one $i\in [r]$ with $\dim(t) = \dim(t_i)$, the grammar $G^{\dle{k}}$ is nonexpansive.

\begin{definition}\label{def:unf}
Let $G$ be a context-free grammar $G=(\vars,\al,P)$, and let $k$ be a fixed natural number. Set $\vars^\dle{k} := \{ X^\dle{d}, X^\deq{d} \mid X\in\vars, 0 \le d \le k\}$. The grammar $G^{\dle{k}}=(\vars^{\dle{k}},\al,P^{\dle{k}})$ consists then of exactly the following rules:
\begin{itemize}
\item
$\Xdle{d} \to \Xdeq{e}$ for every $d\in [k]\cup\{0\}$, and every $e \in [d]\cup\{0\}$.
\item 
If $X\to_G u_0$, then $X^{\deq{0}}\to_{G^{\dle{k}}} u_0$. 
\item 
If $X\to_G u_0 X_1 u_1$, then $X^{\deq{d}}\to_{G^{\dle{k}}} u_0 X_1^{\deq{d}} u_1$ for every $d\in[k]\cup\{0\}$.
\item 
If $X\to_G u_0 X_1 u_1 \ldots u_{r-1} X_{r} u_r$ with $r>1$:
  \begin{itemize}
  \item
  For every $d\in [k]$, and every $j\in [r]$:
  
  Set $Z_j := X_i^{\deq{d}}$ and $Z_i := X_i^{\dle{d-1}}$ if $i\neq j$ for all $i\in [r]-\{j\}$.
  Then:
  \begin{equation*}
  X^{\deq{d}}\to_{G^{\dle{k}}} u_0 Z_{1} u_1 \ldots u_{r-1} Z_{r} u_r.
  \end{equation*}
  \item
  For every $d\in [k]$, and every $J\subseteq [r]$ with $\abs{J}\ge 2$:
  
  Set 
  $Z_i := X_i^\deq{d-1}$ if $i\in J$ and $Z_i := X_i^\dle{d-2}$ if $i\not \in J$.
  If all $Z_i$ are defined, i.e., $d\ge 2$ if $r>2$, then:
  \begin{equation*}
  X^{\deq{d}}\to_{G^{\dle{k}}} u_0 Z_0 u_1 \ldots  \ldots u_{r-1} Z_{r-1} u_r.
  \end{equation*}
  \end{itemize}\eod
\end{itemize}
\end{definition}

As the sets of variables of $G$ and $G^{\dle{k}}$ are disjoint, in the following, we simply write  $\amb_{X}$ for $\amb_{G,X}$, $\amb_{X^{\dle{d}}}$ for $\amb_{G^{\dle{k}},X^{\dle{d}}}$, $X$-tree for $(G,X)$-tree, and so on.

\begin{lemma}\label{lem:unf}
Every $X^{\deq{d}}$-tree resp.\ $X^\dle{d}$-tree has dimension exactly resp.\ at most $d$. 
There is a yield-preserving bijection between the $X^{\deq{d}}$-trees resp.\ $X^\dle{d}$-trees and the $X$-trees of dimension exactly resp.\ at most $d$.
\end{lemma}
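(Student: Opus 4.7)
The plan is to define a yield-preserving map $\phi$ from $G^{\dle{k}}$-trees to $G$-trees that forgets the dimension annotations, and then establish that it restricts to a bijection between $X^{\deq{d}}$-trees (resp.\ $X^{\dle{d}}$-trees) and $X$-trees of dimension exactly (resp.\ at most) $d$. The proof proceeds by structural induction on trees in $\tr_{G^{\dle{k}}}$ and, for surjectivity, by structural induction on trees in $\tr_G$ together with induction on $d$.

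I define $\phi$ as follows. A tree rooted at a rule $X^{\dle{d}}\to X^{\deq{e}}$ has a single child $t'$ and we set $\phi(t) := \phi(t')$. A tree $t = \sigma_{X^{\deq{d}},\alpha} t_1\ldots t_r$ has an underlying $G$-rule $X\to_G \gamma$ obtained by deleting the $\deq{\cdot}$ and $\dle{\cdot}$ annotations from $\alpha$; I set $\phi(t) := \sigma_{X,\gamma}\,\phi(t_1)\ldots\phi(t_r)$. Since the $X^{\dle{d}}\to X^{\deq{e}}$ rules contain no terminals and all other rules preserve the terminal pattern $u_0\ldots u_r$, the map $\phi$ is yield-preserving.

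For the dimension bound, I verify by structural induction that $\phi(t)$ has dimension exactly $d$ whenever $t$ is an $X^{\deq{d}}$-tree. The leaf case ($r=0$) is immediate since then $d=0$. For the unary case the dimension propagates directly from the child, matching the rule $X^{\deq{d}}\to u_0 X_1^{\deq{d}} u_1$. For the two rule schemas with $r\ge 2$, the induction hypothesis yields: in the first schema exactly one child $\phi(t_j)$ has dimension $d$ while the others have dimension at most $d-1$, so by definition $\dim(\phi(t))=d$; in the second schema the set $J$ of children of dimension exactly $d-1$ has size $\ge 2$ while the remaining children have dimension at most $d-2$, so the maximum $d-1$ is attained at least twice and again $\dim(\phi(t))=d$. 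For an $X^{\dle{d}}$-tree that begins with a rewrite $X^{\dle{d}}\to X^{\deq{e}}$ with $e\le d$, the image has dimension exactly $e\le d$.

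For the bijection direction, I construct an inverse $\psi$ on $X$-trees of dimension exactly $d$: at the root, annotate with $X^{\deq{d}}$; if $r=0$ use the unique $X^{\deq{0}}$-rule; if $r=1$ use $X^{\deq{d}}\to u_0 X_1^{\deq{d}} u_1$ and recurse; if $r\ge 2$ compute $d_i := \dim(s_i)$ and distinguish whether $d$ is attained at a unique position $j$ (first rule schema, annotate $s_j$ with $\deq{d}$ and the remaining $s_i$ with $\dle{d-1}$ followed by the forced rewrite $X_i^{\dle{d-1}}\to X_i^{\deq{d_i}}$) or whether $\max_i d_i = d-1$ is attained on a set $J$ of size $\ge 2$ (second rule schema with this $J$, annotating $s_i$ for $i\in J$ with $\deq{d-1}$ and the rest with $\dle{d-2}$). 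These two cases exhaust the definition of $\dim$ when $r\ge 2$, and the induction hypothesis supplies $\psi$ on the subtrees. For $X^{\dle{d}}$-trees, prepend the forced rewrite $X^{\dle{d}}\to X^{\deq{\dim(s)}}$.

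The main obstacle I expect is the case analysis for $r\ge 2$: one must check that the two rule schemas cover exactly the two ways a tree can achieve dimension $d$ (unique $d$-child versus $\ge 2$ children of dimension $d-1$), and that the side conditions on $d$ ($d\ge 1$, $d\ge 2$ when $r>2$ and $J\ne[r]$) exactly match the existence of trees in the corresponding range. The uniqueness of $\psi(s)$ then follows because the choice of annotation at every node is forced by the dimensions of the subtrees, giving both injectivity and $\phi\circ\psi = \mathrm{id}$, $\psi\circ\phi = \mathrm{id}$.
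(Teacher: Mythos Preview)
Your proposal is correct and follows essentially the same approach as the paper: define the annotation-forgetting map (your $\phi$, the paper's $\hat{\cdot}$), verify it is yield- and dimension-preserving by structural induction, and construct an explicit inverse (your $\psi$, the paper's $\check{\cdot}$) that reannotates each node by reading off the dimensions of its subtrees. The paper additionally introduces auxiliary invariants ($\diml$, $\dimc$) to argue injectivity directly before giving the inverse, whereas you obtain injectivity more economically from $\psi\circ\phi=\mathrm{id}$; both routes are equivalent here since the annotation at each node is uniquely determined by the subtree dimensions.
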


\begin{corollary}\label{cor:amb}
$\amb_{X^{\dle{k}}}(w) = \abs{\{ t \in \tr_{G,X} \mid \Y(t)=w \wedge \dim(t)\le k \}}$ for all $X\in\vars$.\eod
\end{corollary}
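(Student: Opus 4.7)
The statement is a direct consequence of Lemma \ref{lem:unf}, so the plan is simply to unpack the definitions and invoke that lemma; there is no real obstacle here.

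First I would expand the left-hand side according to the definition of $\amb$ given just before Definition \ref{def:unf}:
\begin{equation*}
\amb_{X^{\dle{k}}}(w) \;=\; \bigl|\{\, t \in \tr_{G^{\dle{k}},\,X^{\dle{k}}} \mid \Y(t) = w \,\}\bigr|,
\end{equation*}
where the shorthand convention introduced after Definition \ref{def:unf} (writing $\amb_{X^{\dle{k}}}$ for $\amb_{G^{\dle{k}},X^{\dle{k}}}$) is being used.

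Next I would invoke Lemma \ref{lem:unf} in the case $d = k$. That lemma supplies a \emph{yield-preserving} bijection
\begin{equation*}
\varphi \colon \tr_{G^{\dle{k}},\,X^{\dle{k}}} \;\longrightarrow\; \{\, t \in \tr_{G,X} \mid \dim(t) \le k \,\}.
\end{equation*}
Because $\varphi$ preserves yields, its restriction to the preimage of $w$ under $\Y$ gives a bijection between $\{t \in \tr_{G^{\dle{k}},X^{\dle{k}}} \mid \Y(t) = w\}$ and $\{t \in \tr_{G,X} \mid \Y(t) = w \wedge \dim(t) \le k\}$. Taking cardinalities yields the claimed identity, finishing the proof.

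The only thing to double-check is the bookkeeping of the ``starting variable'' convention: $\amb_{X^{\dle{k}}}$ should be read with start symbol $X^{\dle{k}}$ (not $X^{\deq{k}}$), which is precisely the version of Lemma \ref{lem:unf} we need, and matches the right-hand side $\dim(t) \le k$ rather than $\dim(t) = k$. No other subtlety arises.
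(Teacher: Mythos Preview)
Your argument is correct and is exactly the deduction the paper intends: the corollary is marked with the end-of-statement symbol and given no proof precisely because it follows immediately from the yield-preserving bijection of Lemma~\ref{lem:unf}, just as you spell out. (One tiny quibble: the definition of $\amb$ you invoke is in the Preliminaries, Section~\ref{sec:pre}, not ``just before Definition~\ref{def:unf}''.)
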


\begin{theorem}\label{thm:nonexp}
Let $G=(\vars,\al,P)$ be a context-free grammar.
\begin{enumerate}
\item
$\camb_{X^{\dle{k}}}$ is rational in $\Next\aab{\al^\oplus}$.
\item
There is a $k\in\N$ such that $\amb_{X^{\dle{k}}} = \amb_X$ for all $X\in\vars$ if and only if $G$ is nonexpansive.

Further if such a $k$ exists, then $k < \abs{\vars}$.
Analogously, for $\camb_{X^{\dle{k}}} = \camb_X$.\eod
\end{enumerate}
\end{theorem}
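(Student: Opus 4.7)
Since $G^{\dle{k}}$ is nonexpansive by construction of Definition \ref{def:unf}, Part 1 reduces to showing that the commutative ambiguity of any nonexpansive context-free grammar is rational in $\Next\aab{\al^\oplus}$. My starting point is a strengthening of the nonexpansive property: whenever $X$ and $Y$ lie in the same strongly connected component (SCC) of the dependency graph, every sentential form derivable from $X$ contains at most one variable of that SCC in total --- for otherwise any two SCC-variable occurrences could each be further derived into $X$, producing two $X$'s and contradicting nonexpansiveness. I would then induct on the SCCs in reverse topological order: once rational expressions for the variables of all strictly lower SCCs are in hand, substituting them into the equations of the current SCC $S$ yields a system in which every right-hand side contains at most one $S$-variable. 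In the commutative semiring $\Next\aab{\al^\oplus}$ this is a linear matrix equation $\vec{X} = M\vec{X} + \vec{v}$, whose least solution $M^\ast \vec{v}$ is rational by the standard matrix Kleene-star construction; the sink SCCs provide the trivial base case.

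For Part 2, the $(\Leftarrow)$ direction uses the same SCC decomposition to bound tree dimensions. I would prove by induction on the position $\ell$ of an SCC in the reverse topological order (so that sinks are at $\ell = 1$) that every tree rooted at a variable of the $\ell$-th SCC has dimension at most $\ell - 1$: at each node on the ``SCC-spine'' (determined by the unique current-SCC variable in each right-hand side) the remaining child subtrees are rooted at strictly lower SCCs and hence, by induction, have dimension at most $\ell - 2$; a case analysis on whether a spine child exists, combined with the definition of $\dim$, then forces dimension at most $\ell - 1$ at the current node. Since the number of SCCs is at most $|\vars|$, this gives $\dim(t) < |\vars|$ for every $(G,X)$-tree, so $\amb_{X^{\dle{|\vars|-1}}} = \amb_X$ by Corollary \ref{cor:amb}, and the bound $k < |\vars|$ is established. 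The $(\Rightarrow)$ direction is the contrapositive: if $G$ is expansive, some derivation $X \Rightarrow_G^\ast \alpha X \beta X \gamma$ exists, and its $n$-fold self-substitution produces a derivation tree containing an embedded perfect binary tree of height $n$ with $X$-labels, hence of dimension at least $n$, precluding any uniform bound $k$. The $\camb$-version follows verbatim because $\amb_{X^{\dle{k}}} \le \amb_X$ from Lemma \ref{lem:unf} gives $\camb_{X^{\dle{k}}} \le \camb_X$ pointwise, and equality of either remains equivalent to the absence of $G$-trees of dimension exceeding $k$.

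The main technical obstacle is making the linear-per-SCC solution of Part 1 fully rigorous: commutativity of multiplication is essential, since otherwise the unique $S$-variable on each right-hand side could sit between non-commuting factors, and the system would lack the one-sided linear form needed for a direct matrix-Kleene-star solution. This aligns with the authors' own observation that without commutativity one only obtains regular tree expressions rather than rational power series.
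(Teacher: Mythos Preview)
Your argument is correct and parallels the paper's in spirit, but organizes both parts around a different decomposition. For Part~1 the paper does not pass through general nonexpansive grammars; it exploits the explicit dimension-layering of $G^{\dle{k}}$ directly, inducting on $d$ and observing that once the solutions for all $X^{\deq{d'}}, X^{\dle{d'}}$ with $d' < d$ have been substituted, the remaining equations are right-linear (by commutativity) in the level-$d$ variables. Your SCC induction is the general principle of which this is an instance---the dimension layers refine the SCC order of $G^{\dle{k}}$---and has the merit of applying verbatim to any nonexpansive grammar, not just those of the form $G^{\dle{k}}$. For Part~2~($\Leftarrow$) the paper uses a per-tree invariant rather than the grammar's SCC structure: it sets $l(t)$ to the number of \emph{distinct} variables labeling nodes of $t$, observes that nonexpansiveness forces all occurrences of the root variable onto a single root-to-leaf path, and shows that deleting this path leaves a forest of subtrees each with strictly smaller $l$, giving $\dim(t) < l(t) \le \abs{\vars}$. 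Your SCC-position bound is a grammar-global analogue of this tree-local argument and can be sharper when the grammar has fewer SCCs than variables. The expansive direction is argued identically in both treatments.
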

\begin{proof}
The first claim that $\camb_{X^{\dle{k}}}$ is expressible by a weighted rational expression follows directly from the structure of the unfolding of $G^{\dle{k}}$.
With $G^{\dle{k}}$ we associate an algebraic system over $\Next\aab{\N^\al}$ defined by the equations $X = \sum_{X\to \gamma} \gamma$.
The least solution of this system is exactly $\camb$.
For $k=0$ we have only rules which contain at most one variable on the right-hand side. So, the associated algebraic system is linear,
in particular right-linear because of commutativity and, thus, the least solution is expressible by means of a rational expression.
For $k>0$, solving the associated algebraic system bottom up, we have already determined rational expressions for the variables of the form $X^{\dle{d}}$ and $X^{\deq{d}}$
for $d<k$. By the structure of unfolding, the system is again right-linear w.r.t.\ to the remaining variables $X^{\dle{k}}$ and $X^{\deq{k}}$. So the claim follows.

For the second claim, 
assume first that $G$ is expansive. 
Then there is a derivation of the form $Y\Rightarrow w_0 Y w_1 Y w_2$ for some $Y\in\vars$.
Obviously, we can use this derivation to construct $Y$-trees of arbitrary dimension.
Hence, $\camb_{Y^{\dle{k}}} < \camb_Y$ for all $k\in\N$.
Assume now that $G$ is nonexpansive. The definition of ``nonexpansive'' can be restated as:
In any $X$-tree $t=\sigma t_1t_2\ldots t_r$, at most one child contains a node which is labeled by a rule rewriting $X$.
Let $l(t)$ be number of distinct variables $Y$ for which there is at least one node of $t$ which is labeled by a rule rewriting $Y$.
Obviously, $l(t) \le \abs{\vars}$.
Induction on $l(t)$ shows that every derivation tree $t$ satisfying this property has dimension less than $l(t)$:
For $l(t)=1$ a tree with this property cannot contain any nodes of arity two or more. Hence, its dimension is trivially zero.
For $l(t)>1$ given such an $X$-tree $t=\sigma t_1 \ldots t_r$ we can find a simple path $\pi$ leading from the root of $t$ to a leaf
which visits all nodes of $t$ which are labeled by a rule rewriting $X$.
Removing $\pi$ from $t$ we obtain a forest of subtrees each labeled by at most $l(t)-1$ distinct variables,
and each still having above property. Hence, by induction each of these subtrees has dimension less than $l(t)-1$,
and, thus, $t$ has dimension less than $l(t)$.
%
\end{proof}
%
%

We illustrate the construction in the following example.

\begin{example}\label{ex1}
Let $G$ be defined by the productions
\begin{equation*}
X \to a XXXXXX \mid b XXXXX \mid c.
\end{equation*}
The abstract algebraic system associated with this grammar is 
\begin{equation*}
X = aX^6 + bX^5 +c.
\end{equation*}
Using the valuation $h(a) = 1/6$, $h(b)=1/2$, $h(c)=1/3$, we interpret this abstract system as the concrete system
\begin{equation*}
X = 1/6 X^6 + 1/2 X^5 + 1/3
\end{equation*}
over the $\omega$-continuous semiring $\cg{[0,\infty],+,\cdot,0,1}$ of nonnegative reals extended by a greatest element $\infty$ with addition and multiplication extended as in the case of $\Next$.
The least solution $\mu$ of this system, i.e.\ the least nonnegative root of $1/6 X^6 + 1/2 X^5 - X + 1/3$,
can be shown to be neither rational nor expressible using radicals.
We may approximate $\mu$ by evaluating $\camb_{X^{\dle{k}}}$ under $h$.
Up to commutativity, the grammar $G^{\dle{k}}$ corresponds to the following algebraic system:
\begin{equation*}
\begin{array}{lcl@{\hspace{0.5cm}}lcl}
X^{\deq{0}} & = & c &
X^{\dle{0}} & = & c\\[0.2cm]
            & \vdots &  & & \vdots & \\[0.2cm]
X^{\deq{k}} & = & \Bigl( {6 \choose 1} a (X^{\dle{k-1}})^5 + {5 \choose 1 }b (X^\dle{k-1})^4 \Bigr) X^\deq{k} & X^{\dle{k}} & = & \sum_{e=0}^d X^{\deq{e}}\\[0.2cm]
            & + & \sum_{j=2}^6 {6 \choose j} a (X^{\dle{k-2}})^{6-j} (X^\deq{k-1})^j\\[0.2cm]
            & + & \sum_{j=2}^5 {5 \choose j} b (X^{\dle{k-2}})^{5-j} (X^\deq{k-1})^j.\\[0.2cm]
\end{array}
\end{equation*}
From this, rational expressions for $\camb_{X^\dle{k}}$ can easily be obtained:
\begin{equation*}
\begin{array}{lcl@{\hspace{0.5cm}}lcl}
\camb_{X^{\deq{0}}} & = & c &
\camb_{X^{\dle{0}}} & = & c\\[0.2cm]
\camb_{X^{\deq{1}}} & = & (6a c^5 + 5bc^4)^\ast (ac^6 + bc^5) &
\camb_{X^{\dle{1}}} & = & \camb_{X^{\deq{1}}} + \camb_{X^{\dle{0}}}\\[0.2cm]
                    & \vdots & & & \vdots & \\[0.2cm]
\camb_{X^{\deq{k}}} & = & \Bigl( {6 \choose 1} a \camb_{X^{\dle{k-1}}}^5 + {5 \choose 1 }b \camb_{X^{\dle{k-1}}}^4 \Bigr)^\ast &  \camb_{X^{\dle{k}}} & = & \camb_{X^{\deq{k}}} + \camb_{X^{\dle{k-1}}}\\[0.2cm]
            & + & \sum_{j=2}^6 {6 \choose j} a \camb_{X^{\dle{k-2}}}^{6-j} \camb_{X^{\deq{k-1}}}^j\\[0.2cm]
            & + & \sum_{j=2}^5 {5 \choose j} b \camb_{X^{\dle{k-2}}}^{5-j} \camb_{X^{\deq{k-1}}}^j.\\[0.2cm]
\end{array}
\end{equation*}
Evaluating the first three expressions for $\camb_{X^{\dle{k}}}$ under $h$ we obtain the following approximations of $\mu$:
\begin{equation*}
\begin{array}{lcl}
h(\camb_{G^{\dle{k}},X^{\dle{0}}}) & = & 1/3\\[0.1cm]
h(\camb_{G^{\dle{k}},X^{\dle{1}}}) & = & 1/3 + ( 6^{-1} 3^{-6} + 2^{-1} 3^{-5}) (1 - 6\cdot 6^{-1} 3^{-5} - 5\cdot 2^{-1} 3^{-4})^{-1}\\[0.1cm]
                                   & = & \frac{1417}{4221} \approx 0.335702\\[0.1cm]
h(\camb_{G^{\dle{k}},X^{\dle{2}}}) & = & \frac{10981709605561545700033}{32712506178044757018129} \approx 0.335704\\[0.1cm]
\end{array}
\end{equation*}
It can be shown that $h(\camb_{X^{\dle{k}}})$ is exactly the $k$-th approximation obtained by applying Newton's method to $1/6 X^6 + 1/2 X^5 - X + 1/3$ starting at $X=0$.
\eod
\end{example}
\section{Speed of Convergence}\label{sec:conv}
For this section, let $n$ denote the number of variables of the context-free grammar $G$. In \cite{EKL07:stacs} it was shown that,
if $\camb_{X^{\dle{n}}}(\vv) < \camb_{X}(\vv)$, then  $1 \le \camb_{X^{\dle{n}}}(\vv)$, i.e.\ $\supp(\camb_{X^{\dle{n}}})=\supp(\camb_{X})$.
As $\camb_{X^{\dle{n}}}$ is rational, this lower bound yields an alternative proof that $\pa(L(G,X))$ is a regular language.
In this section we extend this result to a lower bound on the speed at which $\camb_{X^{\dle{k}}}$ converges to $\camb_X$ for $k\to \infty$:

By $l(t)$ we denote the number of variables occuring in a derivation tree $t$.
The following lemma was proven in \cite{EKL07:stacs}.
\begin{lemma}
\label{lem:tree-compact}
For every $X$-tree $t$ there is a Parikh-equivalent tree $\tilde{t}$ of dimension at most $l(t)$.
\end{lemma}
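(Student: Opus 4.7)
The plan is to argue by induction on the number of nodes of $t$. If $\dim(t) \le l(t)$ we are already done by setting $\tilde t := t$, so the only nontrivial case is $\dim(t) > l(t)$.

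In that case I would first invoke the characterization of $\dim(t)$ as the depth of the largest perfect binary tree obtainable from $t$ by edge contractions (mentioned right after Definition~1.2) to fix a ``Strahler path'' $\pi = v_0 v_1 \cdots v_{\dim(t)}$ in $t$ with the property that, for every $i < \dim(t)$, the node $v_i$ has some further child $v'_i \neq v_{i+1}$ whose subtree has dimension at least $\dim(t) - i - 1$. The path $\pi$ visits $\dim(t)+1 > l(t)+1$ nodes, each labelled by a rule whose left-hand side lies among the $\le l(t)$ distinct variables occurring in $t$, so by the pigeonhole principle there exist indices $i < j$ with $v_i$ and $v_j$ rewriting the same variable $Y$. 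Writing $v := v_i$ and $w := v_j$, I would factor $t|_v = C[t|_w]$ where $C$ is the $Y$-context obtained by excising $t|_w$ from $t|_v$.

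The crux is then to produce a strictly smaller $X$-tree $t'$ with $\pa(\Y(t')) = \pa(\Y(t))$ and $l(t') \le l(t)$; the induction hypothesis applied to $t'$ would immediately yield the desired $\tilde t$ with $\dim(\tilde t) \le l(t') \le l(t)$. Replacing $t|_v$ by $t|_w$ strictly shrinks the node count but removes the Parikh weight $\pa(\Y(C))$, so this weight must be reinjected: I would do so by pumping the $Y$-loop $C$ into one of the sibling subtrees $t|_{v'_i}, t|_{v'_{i+1}}, \ldots$ anchored along $\pi$. Because these sibling subtrees carry non-trivial dimension (dictated by the Strahler property), after a preliminary application of the induction hypothesis to normalise them, one can expose a $Y$-rewriting into which $C$ can be inserted without enlarging the overall structure.

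The main obstacle — and the technical heart of the argument — is making this sibling-reinjection precise: one must verify that some sibling branch really does admit a $Y$-pump at a location compatible with $C$, and that the resulting $t'$ strictly decreases a well-founded measure while preserving both Parikh weight and the variable count $l(\cdot)$. I would expect to resolve this by a secondary induction on the sibling subtrees (which are strictly smaller than $t$) and a careful case analysis depending on whether the sibling contains $Y$ natively or has to be reshaped first, following the combinatorial template developed in~\cite{EKL07:stacs}.
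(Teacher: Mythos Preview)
The paper does not prove Lemma~\ref{lem:tree-compact}; it merely attributes the result to~\cite{EKL07:stacs}. Your sketch therefore already goes further than the paper itself, and the pigeonhole step---locating a repeated variable $Y$ along a root-to-leaf path of length exceeding $l(t)$---is the right opening move.

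However, the induction you set up does not close. You aim to produce a Parikh-equivalent $t'$ with strictly fewer nodes and then appeal to the induction hypothesis, but removing a pump $C$ and reinserting it elsewhere relocates nodes without changing their total number. Worse, for grammars such as $X\to XX\mid a$ \emph{every} $X$-tree with a fixed Parikh image has exactly the same number of nodes, so no strictly smaller Parikh-equivalent $t'$ exists and the primary induction on $|V_t|$ can never advance. A correct argument needs a different well-founded measure; the natural one is $\dim(t)$ itself, which requires showing that the pump can always be relocated so as to strictly lower the dimension. Your second concern---whether some sibling subtree along the Strahler path contains a $Y$-node at which $C$ can be grafted---is also a genuine obstacle: nothing about the Strahler structure forces the siblings to contain $Y$, and you rightly flag this reinjection step as the unfinished ``technical heart''. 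As it stands the proposal names the right ingredients but does not yet supply the mechanism that makes them fit together.
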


By similar arguments as before we can derive an even stronger convergence-theorem:
\begin{theorem}\label{thm:parikh-ext}
Let $n$ be the number of variables of $G$. Then for all $k \geq 0$ and $\vv\in \N^\al$:
$\camb_{G^{\dle{n+k}}}(\vv) \geq \min(\camb_X(\vv), 2^{2^k})$.
\eod
\end{theorem}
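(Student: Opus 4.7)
The plan is to prove the bound by induction on $k$, with the heart of the argument being a ``squaring'' step that converts $N$ distinct $X$-trees of dimension $\le n+k$ all realising $\vv$ into $N^2$ distinct $X$-trees of dimension $\le n+k+1$ also realising $\vv$. Iterating this doubling from the base estimate $N_0 \ge 1$ then yields $N_k \ge 2^{2^k}$, which is exactly the desired bound.

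For the base case $k=0$, the inequality $\camb_{G^{\dle{n}}}(\vv) \ge 1$ whenever $\camb_X(\vv) \ge 1$ follows directly from Lemma~\ref{lem:tree-compact}: any $X$-tree $t$ with $\pa(\Y(t)) = \vv$ satisfies $l(t) \le n$, so it admits a Parikh-equivalent compactification of dimension $\le n$. To upgrade this to $\min(\camb_X(\vv), 2)$, I would revisit the proof of Lemma~\ref{lem:tree-compact} itself: when $\camb_X(\vv) \ge 2$ there are two distinct witnesses $t_1, t_2$ with Parikh image $\vv$, and either their compactifications are already distinct, or the same-variable repetition argument along a long spine exhibits at least two inequivalent pumping sites, each yielding a distinct compactified tree.

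For the inductive step, assume the bound at $k$ and suppose $\camb_X(\vv) > 2^{2^{k+1}} = (2^{2^k})^2$. The inductive hypothesis delivers $N := 2^{2^k}$ distinct $X$-trees $t_1, \ldots, t_N$ of dimension $\le n+k$ with Parikh image $\vv$. I would then construct, for every ordered pair $(i,j) \in [N]^2$, a distinct $X$-tree $u_{ij}$ of dimension $\le n+k+1$ with Parikh image $\vv$, as follows: fix a variable $Y$ and a template tree containing a binary-branching node with two $Y$-subtrees as its children (such a template must exist, since under the hypothesis there is some tree of dimension $> n+k$ realising $\vv$), and substitute into these two positions Parikh-matched pieces extracted from $t_i$ and $t_j$ respectively. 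Both substituted pieces have dimension $\le n+k$ and sit as the two children of a binary branching point, so the resulting dimension is $\le n+k+1$. Distinctness of the $N^2$ outputs follows because the ordered pair $(i,j)$ can be read off from the two substituted pieces.

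The main obstacle will be engineering the splicing construction. One must (i) fix the template and splice positions uniformly across the whole family $\{t_1, \ldots, t_N\}$, (ii) select the plugged pieces so that the global Parikh image remains exactly $\vv$, which requires a compensation argument in the style of Lemma~\ref{lem:tree-compact}, and (iii) argue distinctness of the $N^2$ outputs even after compensation. The Parikh-preservation step is the most delicate, since substituting raw subtrees from $t_i$ and $t_j$ would change the yield; the fix is to work with canonical Parikh-equivalent pieces selected from each $t_i$, an invariant that has to be maintained throughout the induction and that dovetails with the compactification procedure of Lemma~\ref{lem:tree-compact}.
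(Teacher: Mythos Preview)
Your squaring approach has a genuine gap at its core. You want to take the $N=2^{2^k}$ trees $t_1,\ldots,t_N$, each an $X$-tree of Parikh image $\vv$, and splice pairs into a template to get $N^2$ trees again of Parikh image $\vv$. But the $t_i$ are full derivation trees with yield Parikh vector $\vv$; there is no reason any of them contains a $Y$-subtree whose Parikh image matches that of the template's plug position $T_1$ (let alone that all $N$ of them do). And even if each $t_i$ happened to contain such a subtree, these subtrees could coincide across different $i$ while the $t_i$ still differ elsewhere, so you cannot read the pair $(i,j)$ back from the result. The ``compensation in the style of Lemma~\ref{lem:tree-compact}'' you allude to does not help here: that lemma rearranges material \emph{within a single tree}, which is precisely what preserves the Parikh image; your construction instead tries to import material from foreign trees, and there is no balance sheet that makes this Parikh-neutral.

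The paper's proof avoids this obstacle by never combining independent witnesses. It starts from a single $X$-tree $t$ of dimension $\ge n+k+1$ realising $\vv$, locates inside one of its two large subtrees a perfect binary minor of height $l(t)+k$, and takes the $2^k$ nodes on level $k$ of that minor. Below each of these $2^k$ nodes there is enough height to find a repeated variable and hence a pump-context $Y\Rightarrow^+ uYv$. Each such pump-context can be detached and reattached at a $Y$-node in the other large subtree; since this only moves existing material around inside $t$, the Parikh image stays exactly $\vv$. Choosing independently for each of the $2^k$ pump-contexts whether to relocate it yields $2^{2^k}$ distinct trees (they differ visibly in the first large subtree), and Lemma~\ref{lem:tree-compact} applied to the remaining subtrees brings the dimension down to at most $n+k$. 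The induction is on $|V(t)|$, not on $k$, and the $2^{2^k}$ arises from $2^k$ binary choices inside one large tree, not from squaring a family of smaller witnesses.
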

\begin{proof}
Assume there is a $\vv\in\N^\al$ with $\camb_{X^{\dle{n+k}}}(\vv) < \camb_X(\vv)$, i.e.\ we have some $X$-tree $t$ of dimension at least $n+k+1$ with $\pa(\Y(t))=\vv$. We show that $t$ witnesses the existence of at least $2^{2^k}$ distinct $X$-trees of dimension at most $n+k$ with a yield that is Parikh-equivalent to $t$.

We will prove the following stronger statement which implies the statement of the theorem:
If $\dim(t) \geq l(t) + k + 1$ then there exist at least $2^{2^k}$ Parikh-equivalent trees of dimension at most
$l(t) + k$.

We prove the claim by induction on $|V(t)|$, the number of nodes of $t$.
If $|V(t)| = 1$, then $\dim(t) = 0$ whereas $l(t)+k+1 = k+2 > 0$, so the claim trivially holds.
Observe that if $t$ has a subtree of dimension at least $l(t) + k + 1$
we can apply the induction hypothesis to every such subtree and thus obtain altogether at least $2^{2^k}$ Parikh-equivalent 
trees of dimension lower than $\dim(t)$.
Therefore we can restrict ourselves to the case where $\dim(t) = l(t) + k + 1$ and all subtrees have dimension at most
$l(t) + k$. Note that in this case $t$ must have (at least) two subtrees $t_1,t_2$ of dimension exactly $l(t) + k$.
We distinguish two cases:
\begin{itemize}
\item Case $l(t_1)<l(t)$ or $l(t_2)<l(t)$: Suppose w.l.o.g.\ $l(t_1)<l(t)$. Apply the induction hypothesis to $t_1$, 
since $\dim(t_1) = l(t) + k \geq l(t_1) + k + 1$ and obtain at least $2^{2^k}$ Parikh-equivalent trees of dimension at most 
$l(t_1)+k$. Then we apply Lemma \ref{lem:tree-compact} to every \emph{other} subtree of $t$ to obtain at least $2^{2^k}$ 
different trees $\tilde{t}$ of dimension at most $l(t)+k$.
\item Case $l(t_1)=l(t_2)=l(t)$: (This is the only case that requires actual work)
Since $t_1$ has dimension $l(t) + k$ it contains a perfect binary tree of height $l(t) + k$ as a minor. The set of nodes of 
this minor on level $k$ define $2^k$ (independent) subtrees of $t_1$. Each of these $2^k$ subtrees has height at least 
$l(t)$, thus by the Pigeonhole principle contains a path with two variables repeating.
We reallocate any subset of these $2^k$ pump-trees to $t_2$ which is possible since $l(t_2)=l(t)=l(t_1)$. This changes the 
subtrees $t_1,t_2$ into $\tilde{t_1},\tilde{t_2}$.
Each of these $2^{2^k}$ choices produces a different tree $\tilde{t}$---the trees differ in the subtree $\tilde{t_1}$.
As in the previous case we now apply Lemma \ref{lem:tree-compact} to every subtree of $t$ except $t_1$ thereby reducing the dimension of $\tilde{t}$ to at most $\dim(t_1) = l(t) + k$ thus obtaining
at least $2^{2^k}$ different Parikh-equivalent trees of dimension at most $\dim(t_1) = l(t) + k$.
\end{itemize}
\end{proof}

We state some straightforward consequences of Theorem~\ref{thm:parikh-ext} based on the generalization of context-free grammars to algebraic systems.
We say that a $\omega$-continuous semirng $S$ is {\em collapsed} at some positive integer $k$ if in $S$ the identity $k=k+1$ holds. For instance, the semirings $\N_k\aab{\al^\ast}$ and $\N_k\aab{\N^\al}$ are collapsed at $k$. For $k=1$, the semiring is idempotent. 

\begin{corollary}\label{cor:camb}
$\camb_{X^{\dle{n+\log\log k}}} = \camb_{X}$ over $\N_k\aab{\N^\al}$, and $\camb_{X}$ is rational in $\N_k\aab{\N^\al}$.
\end{corollary}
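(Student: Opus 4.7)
The plan is to instantiate Theorem \ref{thm:parikh-ext} with $m := \lceil \log\log k \rceil$, so that $2^{2^m} \geq k$. The theorem then gives the pointwise bound
\[
\camb_{X^{\dle{n+m}}}(\vv) \;\geq\; \min\bigl(\camb_X(\vv),\, k\bigr) \quad\text{for every } \vv\in\N^\al,
\]
while from the tree-counting description in Corollary \ref{cor:amb} (equivalently, the ``subset of derivation trees'' interpretation of Lemma \ref{lem:unf}) we have the reverse inequality $\camb_{X^{\dle{n+m}}}(\vv)\leq \camb_X(\vv)$ in $\Next$. Combining the two, for every $\vv$ either $\camb_{X^{\dle{n+m}}}(\vv)=\camb_X(\vv)$ already holds in $\Next$, or both values are at least $k$.

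Next I would push these pointwise statements through the canonical $\omega$-continuous homomorphism $\Next\to\N_k$, which is the identity on $\{0,1,\ldots,k-1\}$ and collapses everything $\geq k$ onto $k$. By the universal property recalled in Section \ref{sec:pre}, this extends (acting coefficient-wise) to a unique $\omega$-continuous homomorphism $\pi\colon \Next\aab{\N^\al}\to \N_k\aab{\N^\al}$. In the first case above both coefficients agree already in $\Next$ and hence in $\N_k$; in the second case both coefficients are $\geq k$ and thus both map to $k$. Therefore $\pi(\camb_{X^{\dle{n+m}}}) = \pi(\camb_X)$, which establishes the first claim: $\camb_{X^{\dle{n+\log\log k}}} = \camb_X$ over $\N_k\aab{\N^\al}$.

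For the rationality claim I would invoke Theorem \ref{thm:nonexp}(1) to conclude that $\camb_{X^{\dle{n+m}}}$ is rational in $\Next\aab{\N^\al}$. Since $\omega$-continuous homomorphisms preserve rationality (as recorded in Section \ref{sec:pre}), the image $\pi(\camb_{X^{\dle{n+m}}})$ is rational in $\N_k\aab{\N^\al}$; by the equality just established this image coincides with $\camb_X$ over $\N_k\aab{\N^\al}$, which is therefore rational there.

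No genuinely hard step is required: the whole argument is a bookkeeping exercise that reinterprets the quantitative convergence bound of Theorem \ref{thm:parikh-ext} through the collapse homomorphism $\pi$. The only point needing a little care is the passage from pointwise equality of coefficients in $\N_k$ to equality of the two power series, which is immediate from the coefficient-wise description of $\pi$ obtained from the freeness of $\Next\aab{\N^\al}$.
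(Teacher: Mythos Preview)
Your proof is correct and is exactly the intended argument: the paper states this corollary without proof as a ``straightforward consequence'' of Theorem~\ref{thm:parikh-ext}, and the derivation you give---combining the lower bound $\camb_{X^{\dle{n+m}}}(\vv)\ge\min(\camb_X(\vv),k)$ with the upper bound from Corollary~\ref{cor:amb}, then collapsing via the unique $\omega$-continuous homomorphism $\Next\to\N_k$, and finally invoking Theorem~\ref{thm:nonexp}(1) together with preservation of rationality under $\omega$-continuous homomorphisms---is precisely what is implicit. No further comparison is needed.
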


\begin{corollary}
The least solution of an algebraic system with associated context-free grammar $G$ and valuation $h$ over a commutative $\omega$-continuous semiring $S$ collapsed at $k$ is $(h(\camb_{X^{\dle{n+\log\log k}}})\mid X\in\vars)$.
\end{corollary}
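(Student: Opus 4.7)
The plan is to reduce the claim to Corollary \ref{cor:camb} via the universal property of the free commutative $\omega$-continuous semiring $\Next\aab{\N^\al}$ together with the factorization of $h$ through $\N_k\aab{\N^\al}$ forced by the collapse identity.

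First I would recall that, as stated in the introduction, the least solution $(\mf{s}_X\mid X\in\vars)$ of the algebraic system associated with $G$ over the free commutative $\omega$-continuous semiring $\Next\aab{\N^\al}$ is exactly $(\camb_{G,X}\mid X\in\vars)$, and that any valuation $h\colon\al\to S$ into a commutative $\omega$-continuous semiring $S$ extends uniquely to an $\omega$-continuous homomorphism $H\colon \Next\aab{\N^\al}\to S$ which maps the least solution over $\Next\aab{\N^\al}$ to the least solution over $S$.

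The next step is to observe that, since $S$ is collapsed at $k$, the unique $\omega$-continuous homomorphism $\Next\to S$ factors through the quotient map $\Next\to\N_k$ (we have $H(k)=H(k+1)$ in $S$, and both $H(k+j)$ for $j\ge 0$ and $H(\infty)=H(1^\ast)$ collapse to the same element). Extending this coefficientwise, $H$ factors as $\Next\aab{\N^\al}\xrightarrow{\pi}\N_k\aab{\N^\al}\xrightarrow{\bar H} S$ with both maps $\omega$-continuous homomorphisms, where $\pi$ identifies all coefficients $\ge k$ with $k$. Since $\pi$ is an $\omega$-continuous homomorphism, $(\pi(\camb_{G,X})\mid X\in\vars)$ is the least solution over $\N_k\aab{\N^\al}$; by Corollary \ref{cor:camb} this equals $(\camb_{X^{\dle{n+\log\log k}}}\mid X\in\vars)$ already evaluated in $\N_k\aab{\N^\al}$.

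Applying $\bar H$ now yields the least solution over $S$ as $(\bar H(\camb_{X^{\dle{n+\log\log k}}})\mid X\in\vars) = (h(\camb_{X^{\dle{n+\log\log k}}})\mid X\in\vars)$, since the rational expressions for $\camb_{X^{\dle{n+\log\log k}}}$ (produced by Theorem \ref{thm:nonexp}) are preserved by any $\omega$-continuous homomorphism and hence $h$ applied to the $\Next\aab{\N^\al}$-representative agrees with $\bar H$ applied to its image under $\pi$. The only subtle point, and what I expect to be the main obstacle, is justifying that the factorization $H=\bar H\circ\pi$ actually exists as an $\omega$-continuous homomorphism in the coefficient semiring; this is really just verifying that the collapse $k=k+1$ in $S$ together with $\omega$-continuity forces all of $\{k,k+1,\ldots,\infty\}$ to be identified in the image, which follows from $1^\ast=\sup_i(1+\cdots+1)$ and the $\omega$-continuity of addition.
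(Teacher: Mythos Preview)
Your proposal is correct and follows precisely the route the paper has in mind: the corollary is stated in the paper as a ``straightforward consequence'' without a detailed proof, and the intended argument is exactly the freeness of $\Next\aab{\N^\al}$ combined with Corollary~\ref{cor:camb}, which you have spelled out cleanly via the factorization through $\N_k\aab{\N^\al}$. The only remark is that the factorization step, which you flag as the ``subtle point,'' is indeed routine and requires nothing beyond what you already indicate.
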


By the results of \cite{EKL07:stacs}, the latter corollary is equivalent to saying that Newton's method reaches the least solution of an algebraic system in $n$ variables over a commutative $\omega$-continuous semiring collapsed at $k$ after at most $n+\log\log k$ steps.

\section{Semilinearity}\label{sec:semilinear}
In the following, let $k$ denote a fixed positive integer. 
By Corollary \ref{cor:camb} we know that $\camb_G$ is rational modulo $k=k+1$. In this section, we give a semilinear characterization also of $\camb_G$. We identify in the following a word $w\in\al^\ast$ with its Parikh vector $\pa(w)\in\N^\al$.

In the idempotent setting ($k=1$), see e.g.\ \cite{Pil73,DBLP:books/sp/KuichS86,HK99,AEIparikh}, the identities (i) $(x^\ast)^\ast = x^\ast$, (ii) $(x+y)^\ast = x^\ast y^\ast$, and (iii) $(xy^\ast)^\ast = 1 + xx^\ast y^\ast$ can be used to transform any regular expression into a regular expression in ``semilinear normal form'' $\sum_{i=1}^r w_{i,0} w_{i,1}^\ast \ldots w_{i,l_r}^\ast$ with $w_{i,j}\in\al^\ast$. 
It is not hard to deduce the following identities over $\N_k\aab{\N^\al}$ where $x^{<r}$ abbreviates the sum $\sum_{i=0}^{r-1} x^i$
and $\supp(x)$ is identified with its characteristic function:

\begin{lemma}\label{lem:id}
The following identities hold over $\N_k\aab{\N^\al}$:
\begin{equation*}
\begin{array}{c@{\hspace{0.5cm}}lcl}
\mathsf{(I1)} & k{x} & = & k \supp({x})\\[0.2cm]
\mathsf{(I2)} & (\gamma {x})^\ast & = & (\gamma{x})^{< \lceil \log_{\gamma} k\rceil} + k {x}^{\lceil \log_\gamma k\rceil} {x}^\ast\\[0.2cm]
\mathsf{(I3)} & ({x}^\ast)^\ast & = & k {x}^\ast\\[0.2cm]
\mathsf{(I4)} & ({x}+{y})^\ast & = & ({x}+{y})^{< k} + {x}^k {x}^\ast + {y}^k {y}^\ast + k {x} {y} ({x}+{y})^{\max(k-2,0)} {x}^\ast {y}^\ast\\[0.2cm]
\mathsf{(I5)} & ({x}{y}^\ast)^\ast & = & 1 + xy^\ast + x^2 x^\ast + x^2 y \sum_{0\le m,j < k-2} {2+m+j\choose 1+j} x^m y^j\\[0.2cm]
                     &  & + & k x^2 y (x^{\max(k-2,0)} + y^{\max(k-2,0)}) x^\ast y^\ast
\end{array}
\end{equation*}
for $\gamma$ any integer greater than one.\eod
\end{lemma}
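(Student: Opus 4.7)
The plan is to reduce every identity to a coefficient-wise comparison in the free commutative $\omega$-continuous semiring $\N_k\aab{\N^{\{x,y\}}}$ (or its one-variable restriction for the unary identities). Any concrete $x, y \in \N_k\aab{\N^\al}$ arise as the image under an $\omega$-continuous homomorphism from this free object, and such homomorphisms preserve rational expressions, so it suffices to expand each side as a formal series $\sum_{i,j\ge 0} c_{i,j}\, x^i y^j$ with $c_{i,j}\in \N_k$ and check that the coefficients agree.

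Identity $\mathsf{(I1)}$ is immediate from $\N_k$-arithmetic: $k\cdot c$ equals $k$ for $c\ge 1$ and $0$ for $c=0$, which is precisely $k$ times the characteristic value of $c$; applying this coefficient by coefficient yields $kx=k\supp(x)$. For $\mathsf{(I2)}$, I would expand $(\gamma x)^\ast = \sum_{i\ge 0} \gamma^i x^i$ and let $d:=\lceil \log_\gamma k\rceil$ be the least index with $\gamma^d\ge k$; splitting the sum at $d$ and applying $\mathsf{(I1)}$ to the tail gives $(\gamma x)^{<d}+ k x^d x^\ast$. For $\mathsf{(I3)}$, expand $(x^\ast)^\ast = 1+\sum_{j\ge 1} (x^\ast)^j$ and substitute $(x^\ast)^j=\sum_{m\ge 0} \binom{m+j-1}{j-1} x^m$; every coefficient of $x^m$ is then an infinite $\N$-sum and therefore collapses to $k$ by $\mathsf{(I1)}$, so the total equals $k+kx x^\ast = kx^\ast$ since $1+k=k$ in $\N_k$.

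The substantive work is in $\mathsf{(I4)}$ and $\mathsf{(I5)}$. For $\mathsf{(I4)}$, expand $(x+y)^\ast = \sum_{i,j\ge 0}\binom{i+j}{i} x^i y^j$ and partition the index set into four disjoint regions: (a) $i+j<k$, captured exactly by $(x+y)^{<k}$; (b) $j=0,\ i\ge k$, captured by $x^k x^\ast$; (c) $i=0,\ j\ge k$, captured by $y^k y^\ast$; (d) $i,j\ge 1$ with $i+j\ge k$, where $\binom{i+j}{i}\ge \binom{k}{1}=k$, so every such coefficient collapses to $k$ by $\mathsf{(I1)}$; the generating series of region (d) is precisely $k\, xy\,(x+y)^{\max(k-2,0)}\,x^\ast y^\ast$. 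Identity $\mathsf{(I5)}$ follows the same template applied to $(xy^\ast)^\ast = 1+\sum_{n\ge 1,\,m\ge 0}\binom{n+m-1}{n-1} x^n y^m$: the cases $n=0$, $n=1$, and $n\ge 2,\,m=0$ give the first three summands $1$, $xy^\ast$, $x^2 x^\ast$; the ``small'' regime $2\le n\le k-1,\ 1\le m\le k-2$ keeps the exact binomial coefficients, producing the explicit double sum; and the complementary ``large'' regime, where the binomial coefficient is $\ge k$, contributes the final $k$-weighted term.

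The main obstacle will be the bookkeeping in $\mathsf{(I4)}$ and $\mathsf{(I5)}$: one must verify that in every ``large'' regime the relevant binomial coefficient is indeed $\ge k$ so that the $\mathsf{(I1)}$-collapse is justified, and that the decomposition of the index set is actually a \emph{partition}, so no monomial is double-counted. Degenerate cases for small $k$ also require attention, for instance $k=1$ (where the explicit binomial sum in $\mathsf{(I5)}$ is empty, $\max(k-2,0)=0$ forces $(x+y)^{\max(k-2,0)}=1$, and the identities must reduce to the classical idempotent ones) and $k=2$ (where the threshold $k-1$ shrinks the small-index region to its boundary).
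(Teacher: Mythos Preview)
Your proposal is correct and follows essentially the same route as the paper: expand each side as a power series, split the index set according to whether the relevant binomial coefficient is below $k$ or at least $k$ (using $\binom{a}{b}\ge a$ for $0<b<a$), and collapse the ``large'' region via $\mathsf{(I1)}$. The only cosmetic difference is that you first pass to the free object $\N_k\aab{\N^{\{x,y\}}}$ and argue coefficient by coefficient, whereas the paper performs the same binomial manipulations directly for arbitrary $x,y\in\N_k\aab{\N^\al}$; in particular, for $\mathsf{(I4)}$ the paper arrives at $k\,xy\,(x+y)^{\max(k-2,0)}(x+y)^\ast$ and then invokes the idempotent identity $\supp((x+y)^\ast)=\supp(x^\ast y^\ast)$ together with $\mathsf{(I1)}$ to replace $(x+y)^\ast$ by $x^\ast y^\ast$, which is exactly your observation that the support of $xy(x+y)^{\max(k-2,0)}x^\ast y^\ast$ equals region~(d).
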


Consider a rational series $\mf{r}\in \N_k\aab{\N^\al}$ represented by the rational expression $\rho$. 
The above identities, where $\mathsf{(I3),(I4),(I5)}$ generalizes (i), (ii), (iii), respectively, allow one to reduce the star height of $\rho$ to at most one by distributing the Kleene stars over sums $(\rho_1 +\rho_2)^\ast$ and products $(\rho_1 \rho_2)^\ast$ -- in the latter case if $\rho_1\rho_2 \not\in\al^\ast$ --
yielding a rational expression $\rho'$ of the form
\begin{equation*}
\rho'=\sum_{i=1}^s \gamma_i w_{i,0} w_{i,1}^\ast \ldots w_{i,l_i}^\ast \quad (w_{i,j}\in\al^\ast,\ \gamma_{i}\in\N_k).
\end{equation*}
which still represents $\mf{r}$ over $\N_k\aab{\N^\al}$. 
By $\mathsf{(I1)}$ we know that, if $\gamma_{i,0}=k$, we may replace $w_{i,0} w_{i,1}^\ast \ldots w_{i,l_i}^\ast$ by its support which is a linear set in $\N^\al$. This can be generalized to $k>1$:

\begin{theorem}\label{thm:semilin}
Every rational $\mf{r}\in \N_k\aab{\N^\al}$ can be represented as a finite sum of weighted linear sets, i.e.\
\begin{equation*}
\mf{r} = \sum_{i\in[s]} \gamma_i \supp(w_{i,0} w_{i,1}^\ast \ldots w_{i,l}^\ast)\ \text{ with $w_{i,j}\in\al^\ast$ and $\gamma_i\in \N_k$}.
\end{equation*}
\end{theorem}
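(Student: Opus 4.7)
The plan is to start from the star-height-one expression $\rho'=\sum_{i=1}^s \gamma_i\, w_{i,0} w_{i,1}^\ast\cdots w_{i,l_i}^\ast$ constructed in the paragraph preceding the theorem. Since addition in $\N_k\aab{\N^\al}$ is defined pointwise, it is enough to rewrite each individual summand $\gamma\, w_0 w_1^\ast\cdots w_l^\ast$ (with $w_j\in\al^\ast$, $\gamma\in\N_k$) as a finite sum of weighted characteristic functions of linear sets, and then concatenate the results.

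Fix such a summand and set $\vv_j:=\pa(w_j)\in\N^\al$. In $\Next\aab{\N^\al}$ the coefficient of $w_0 w_1^\ast\cdots w_l^\ast$ at a Parikh vector $\vv$ is exactly the representation count
\begin{equation*}
N(\vv)\ :=\ \bigl|\bigl\{(\lambda_1,\ldots,\lambda_l)\in\N^l\ :\ \vv_0+\textstyle\sum_{j=1}^l \lambda_j\vv_j=\vv\bigr\}\bigr|\ \in\ \Next,
\end{equation*}
and hence, under the canonical collapsing homomorphism $\Next\to\N_k$, the coefficient of the same power series in $\N_k\aab{\N^\al}$ equals $\min(N(\vv),k)$.

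The heart of the proof is to show that for each $j\in\{1,\ldots,k\}$ the level set
\begin{equation*}
L_j\ :=\ \{\vv\in\N^\al\ :\ \min(N(\vv),k)=j\}
\end{equation*}
is semilinear. For $j<k$, the property ``$N(\vv)=j$'' is captured by a Presburger formula that asserts the existence of $j$ pairwise distinct tuples $\lambda^{(1)},\ldots,\lambda^{(j)}\in\N^l$ each satisfying $\vv_0+\sum_m \lambda_m^{(i)}\vv_m=\vv$, together with the non-existence of a further $(j+1)$-st such tuple; for $j=k$ one only demands the existence of $k$ pairwise distinct solutions. By the classical Ginsburg-Spanier theorem, a subset of $\N^\al$ is Presburger-definable if and only if it is semilinear, so each $L_j$ decomposes as a finite union of linear sets $C_{j,1},\ldots,C_{j,r_j}$.

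Combining these observations, over $\N_k\aab{\N^\al}$ we may rewrite the chosen summand as
\begin{equation*}
\gamma\, w_0 w_1^\ast\cdots w_l^\ast\ =\ \sum_{j=1}^{k}\min(\gamma\cdot j,\,k)\sum_{m=1}^{r_j}\vecn{1}_{C_{j,m}},
\end{equation*}
which is already a finite sum of weighted characteristic functions of linear sets with weights in $\N_k$; adding the contributions from all $i\in[s]$ then proves the theorem. The only genuinely non-trivial ingredient is the appeal to the Ginsburg-Spanier equivalence between Presburger-definable and semilinear subsets of $\N^\al$; everything else amounts to bookkeeping once $N$ and its level sets have been identified, together with the collapse identity $\mathsf{(I1)}$ which is implicit in the $\min(\gamma\cdot j,k)$.
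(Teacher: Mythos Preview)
Your argument is correct and takes a genuinely different route from the paper's. The paper proceeds by an explicit induction on the number $l$ of Kleene stars in a term $w_1^\ast\cdots w_l^\ast$: if the Parikh vectors $\pa(w_1),\ldots,\pa(w_l)$ are linearly independent, every coefficient is at most~$1$; otherwise a nonzero integer kernel vector $\vecn{n}$ is used to split off, via an explicit cutoff $C=\lVert\vecn{n}\rVert_\infty(k-1)$, a part whose coefficients are all~$\ge k$ (hence replaceable by $k$ times its support by~$\mathsf{(I1)}$), leaving a remainder with at most $l-2$ stars. This yields a concrete algorithm and size bounds for the resulting semilinear representation. Your approach instead observes that each level set $L_j=\{\vv:\min(N(\vv),k)=j\}$ is Presburger-definable and invokes Ginsburg--Spanier; this is shorter and conceptually clean, but non-constructive in the sense that the linear pieces and their number are hidden inside the quantifier-elimination procedure.

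One point you glossed over: when you write $\vecn{1}_{L_j}=\sum_{m}\vecn{1}_{C_{j,m}}$ you need the linear pieces $C_{j,m}$ to be pairwise \emph{disjoint}; otherwise a $\vv$ lying in several $C_{j,m}$ would be overcounted, and since $\N_k$ has no subtraction this cannot be repaired by inclusion--exclusion. Ginsburg--Spanier alone only gives a (not necessarily disjoint) union of linear sets. The fix is standard---every semilinear set is a finite \emph{disjoint} union of linear sets (Eilenberg--Sch\"utzenberger)---but it should be stated explicitly. With that addition your proof goes through.
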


\begin{example}
The rational expression $\rho=(a+2b)^\ast$ represents the series $\sum_{i,j\in\N} 2^j a^i b^j$ in $\Next\aab{\N^\al}$.
Computing over $N_2\aab{\N^\al}$ we may transform $\rho$ as follows:
\begin{equation*}
\begin{array}{cl@{\hspace{1cm}}l}
  &(a+2b)^\ast & \mathsf{(I4)} \\
= & (a+2b)^{<2} + a^2 a^\ast + (2b)^2 (2b)^\ast + 2 a (2b) a^\ast (2b)^\ast & (\mathsf{I1})\\
= & \ew + a + 2b + a^2 a^\ast + 2 b^2 b^\ast + 2 ab a^\ast b^\ast & (x^\ast = \sum_{i\in\N} x^i,\mathsf{I1})\\
= & a^\ast + 2( b b^\ast + ab a^\ast b^\ast) & (x^\ast = \sum_{i\in\N} x^i,\mathsf{I1})\\
= & a^\ast + 2( b b^\ast a^\ast ) &  \mathsf{(I1)}\\
= & a^\ast + 2 \supp( b b^\ast a^\ast) & (a^\ast = \sum_{i\in\N} 1 a^i)\\
= & 1 \supp(a^\ast) + 2 \supp( b b^\ast a^\ast).
\end{array}
\end{equation*}
\end{example}

\begin{corollary}\label{cor:presb}
For every $k\in\Next$ we can construct a formula of Presburger arithmetic that represents the set $\{ \vv \in \N^\al\mid \camb_{G,X}(\vv) = k \}$.
\end{corollary}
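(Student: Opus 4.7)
The plan splits into two cases depending on whether $k$ is finite or $\infty$.

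For $k \in \N$, I would apply Corollary~\ref{cor:camb} to obtain $\camb_{G,X}$ as a rational series in $\N_{k+1}\aab{\N^\al}$ and Theorem~\ref{thm:semilin} to decompose it as $\camb_{G,X} = \sum_{i=1}^{s} \gamma_i \supp(L_i)$ with each $L_i$ linear and $\gamma_i \in \{0,1,\ldots,k+1\}$. The canonical $\omega$-continuous homomorphism $\Next \to \N_{k+1}$ is the identity on $\{0,\ldots,k\}$ and sends everything above to the absorbing element $k+1$, so $\camb_{G,X}(\vv) = k$ over $\Next$ iff the $\N_{k+1}$-sum $\sum_{i \in I(\vv)} \gamma_i$ equals $k$, where $I(\vv) := \{i \mid \vv \in L_i\}$; since the value $k$ is not absorbed, the $\N_{k+1}$-sum coincides with the ordinary $\N$-sum. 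Hence the target set is the finite disjunction, over subsets $S \subseteq [s]$ satisfying $\sum_{i \in S} \gamma_i = k$ in $\N$, of the formulas $\bigwedge_{i \in S} \vv \in L_i \wedge \bigwedge_{i \notin S} \vv \notin L_i$; since membership in a linear set is trivially Presburger-definable, so is the whole formula.

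For $k = \infty$ the rational-series machinery is insufficient, since no collapse $\N_m$ can separate infinite from large finite ambiguity. Instead I would resort to a direct grammar argument. Let $V_\infty := \{Y \in \vars \mid Y \Rightarrow_G^+ Y\}$ be the effectively computable set of variables admitting a nontrivial silent self-derivation. The key characterization I would prove is: $\camb_{G,X}(\vv) = \infty$ iff some $X$-tree $t$ with $\pa(\Y(t)) = \vv$ has a node labeled by a rule rewriting some $Y \in V_\infty$. The forward direction pumps the silent loop $Y \Rightarrow_G^+ Y$ at the $V_\infty$-node to produce infinitely many Parikh-$\vv$ $X$-trees. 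The converse uses pigeonhole: if $\camb_{G,X}(\vv) = \infty$, trees for $\vv$ attain unbounded height, and in a sufficiently tall such tree some variable $Y$ must repeat on some root-to-leaf path often enough that at least one intervening context contributes nothing to the yield, witnessing $Y \Rightarrow_G^+ Y$.

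Given this characterization, I would construct the CFG $G' = (\vars \cup \{Y^\bullet \mid Y \in \vars\}, \al, P')$ whose rules are all rules of $G$, together with $Y^\bullet \to \gamma$ for each $(Y, \gamma) \in P$ with $Y \in V_\infty$, and $Y^\bullet \to \gamma'$ whenever $\gamma'$ arises from $\gamma$ by marking exactly one variable occurrence on the right-hand side with $\bullet$. A straightforward structural induction shows $L(G', X^\bullet)$ equals the set of yields of $X$-trees containing at least one $V_\infty$-rewriting node, so $\{\vv \mid \camb_{G,X}(\vv) = \infty\} = \pa(L(G', X^\bullet))$, which is semilinear by Parikh's original theorem and hence Presburger-definable.

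The main obstacle is the converse of the characterization in the $k = \infty$ case: naive pigeonhole on long paths yields a repeated variable $Y$, but not obviously one whose sandwich derivation is $\pa$-neutral. The quantitative form above---observing that at most $\sum_{a \in \al} \vv_a$ of the repeats along any path can be ``productive'' (contribute a terminal to the yield), so in a sufficiently tall tree a $\pa$-neutral repeat must appear---is what actually forces a $V_\infty$-node to be present in some tree for $\vv$.
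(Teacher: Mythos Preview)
Your proposal is correct and follows essentially the same approach as the paper. For finite $k$, both you and the paper pass to the collapsed semiring $\N_{k+1}$, invoke the semilinear decomposition of Theorem~\ref{thm:semilin}, and then express the condition ``the weighted sum over the linear pieces containing $\vv$ equals exactly $k$'' in Presburger arithmetic---you by a disjunction over subsets $S\subseteq[s]$ with $\sum_{i\in S}\gamma_i=k$, the paper by an equivalent existential formula with $0/1$ indicator variables $y_i$. For $k=\infty$, both arguments rest on the same characterization via cyclic variables $Y\Rightarrow_G^+ Y$ and the same pigeonhole argument on long root-to-leaf paths (the paper first reduces to $\amb_{G,X}(w)=\infty$ using that only finitely many words share a given Parikh vector, whereas you argue directly for $\camb$; the two routes are easily seen to be equivalent). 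Your bulleted-variable grammar $G'$ is the mirror image of the paper's primed/unprimed construction: your $Y^\bullet$ plays the role of the paper's unprimed $Y$ (``still needs to reach a cyclic variable''), and discharging the bullet at a $V_\infty$-node corresponds to the paper's rule $Y\to_{G'} Y'$.
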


\section{Acknowledgment}
The authors likes to thank Volker Diekert for his help with a first version of Theorem~\ref{thm:parikh-ext},
Rupak Majumdar for his pointer to \cite{DBLP:conf/pods/GreenKT07}, and
Javier Esparza and Andreas Gaiser for many helpful discussions.

\bibliographystyle{alpha}
\bibliography{db}

\begin{thebibliography}{EGKL11}

\bibitem[ABB97]{cfgHFL}
J.~M. Autebert, J.~Berstel, and L.~Boasson.
\newblock {\em Handbook of Formal Languages}, volume~1, chapter 3: Context-Free
  Languages and Pushdown Automata, pages 111 -- 174.
\newblock Springer, 1997.

\bibitem[AEI01]{AEIparikh}
L.~Aceto, Z.~\'{E}sik, and A.~Ing\'{o}lfsd\'{o}ttir.
\newblock A fully equational proof of {P}arikh's theorem.
\newblock {\em RAIRO, Theoretical Informatics and Applications}, 36:200--2,
  2001.

\bibitem[B{\'E}09]{DBLP:journals/iandc/BloomE09}
Stephen~L. Bloom and Zolt{\'a}n {\'E}sik.
\newblock Axiomatizing rational power series over natural numbers.
\newblock {\em Inf. Comput.}, 207(7):793--811, 2009.

\bibitem[Boz99]{DBLP:journals/mst/Bozapalidis99}
S.~Bozapalidis.
\newblock Equational elements in additive algebras.
\newblock {\em Theory Comput. Syst.}, 32(1):1--33, 1999.

\bibitem[BR82]{DBLP:journals/tcs/BerstelR82}
J.~Berstel and C.~Reutenauer.
\newblock Recognizable formal power series on trees.
\newblock {\em Theor. Comput. Sci.}, 18:115--148, 1982.

\bibitem[CGT89]{DBLP:journals/tkde/CeriGT89}
S.~Ceri, G.~Gottlob, and L.~Tanca.
\newblock What you always wanted to know about datalog (and never dared to
  ask).
\newblock {\em IEEE Trans. Knowl. Data Eng.}, 1(1):146--166, 1989.

\bibitem[CS63]{ChomskySch1963}
N.~Chomsky and M.P. Sch{\"u}tzenberger.
\newblock {\em Computer Programming and Formal Systems}, chapter The Algebraic
  Theory of Context-Free Languages, pages 118 -- 161.
\newblock North Holland, 1963.

\bibitem[DK95]{DBLP:journals/ipl/DevroyeK95}
L.~Devroye and P.~Kruszewski.
\newblock A note on the {H}orton-{S}trahler number for random trees.
\newblock {\em Inf. Process. Lett.}, 56(2):95--99, 1995.

\bibitem[DK09]{KuiHWA}
M.~Droste and W.~Kuich.
\newblock {\em Handbook of Weighted Automata}, volume~1, chapter 1: Semirings
  and formal power series, pages 3 -- 27.
\newblock Springer, 2009.

\bibitem[EGKL11]{EGKL10:parikh}
J.~Esparza, P.~Ganty, S.~Kiefer, and M.~Luttenberger.
\newblock Parikh's theorem: A simple and direct automaton construction.
\newblock {\em Inf. Process. Lett.}, 111(12):614--619, 2011.

\bibitem[EKL07a]{EKL07:dlt}
J.~Esparza, S.~Kiefer, and M.~Luttenberger.
\newblock An extension of {N}ewton's method to $\omega$-continuous semirings.
\newblock In {\em DLT}, pages 157--168, 2007.

\bibitem[EKL07b]{EKL07:stacs}
J.~Esparza, S.~Kiefer, and M.~Luttenberger.
\newblock On fixed point equations over commutative semirings.
\newblock In {\em STACS}, pages 296--307, 2007.

\bibitem[EKL08]{EKL08:dlt}
J.~Esparza, S.~Kiefer, and M.~Luttenberger.
\newblock Derivation tree analysis for accelerated fixed-point computation.
\newblock In {\em DLT}, pages 301--313, 2008.

\bibitem[EKL10]{DBLP:journals/jacm/EsparzaKL10}
J.~Esparza, S.~Kiefer, and M.~Luttenberger.
\newblock Newtonian program analysis.
\newblock {\em J. ACM}, 57(6):33, 2010.

\bibitem[Ers58]{DBLP:journals/cacm/Ershov58}
A.~P. Ershov.
\newblock On programming of arithmetic operations.
\newblock {\em Commun. ACM}, 1(8):3--9, 1958.

\bibitem[FFV79]{DBLP:conf/focs/FlajoletFV79}
P.~Flajolet, J.~Fran\c{c}on, and J.~Vuillemin.
\newblock Towards analysing sequences of operations for dynamic data structures
  (preliminary version).
\newblock In {\em FOCS}, pages 183--195, 1979.

\bibitem[GKT07]{DBLP:conf/pods/GreenKT07}
T.~J. Green, G.~Karvounarakis, and V.~Tannen.
\newblock Provenance semirings.
\newblock In {\em PODS}, pages 31--40, 2007.

\bibitem[GS68]{GinsburgSpanier68:deriv-bounded}
S.~Ginsburg and E.~Spanier.
\newblock Derivation-bounded languages.
\newblock {\em Journal of Computer and System Sciences}, 2:228--250, 1968.

\bibitem[HK99]{HK99}
M.~W. Hopkins and D.~Kozen.
\newblock Parikh's theorem in commutative {K}leene algebra.
\newblock In {\em Logic in Computer Science}, pages 394--401, 1999.

\bibitem[Hor45]{Horton45}
R.~E. Horton.
\newblock Erosional development of streams and their drainage basins:
  hydro-physical approach to quantitative morphology.
\newblock {\em Geological Society of America Bulletin}, 56(3):275--370, 1945.

\bibitem[KS86]{DBLP:books/sp/KuichS86}
Werner Kuich and Arto Salomaa.
\newblock {\em Semirings, Automata, Languages}, volume~5 of {\em Monographs in
  Theoretical Computer Science. An EATCS Series}.
\newblock Springer, 1986.

\bibitem[Kui97]{Kui}
W.~Kuich.
\newblock {\em Handbook of Formal Languages}, volume~1, chapter 9: Semirings
  and Formal Power Series: Their Relevance to Formal Languages and Automata,
  pages 609 -- 677.
\newblock Springer, 1997.

\bibitem[Pil73]{Pil73}
D.~L. Pilling.
\newblock Commutative regular equations and {P}arikh's theorem.
\newblock {\em Journal of the London Mathematical Society}, pages 663--666,
  1973.

\bibitem[RS83]{DBLP:journals/jct/RobertsonS83}
N.~Robertson and P.~D. Seymour.
\newblock Graph minors. i. excluding a forest.
\newblock {\em J. Comb. Theory, Ser. B}, 35(1):39--61, 1983.

\bibitem[SS78]{DBLP:books/daglib/0067812}
A.~Salomaa and M.~Soittola.
\newblock {\em Automata-theoretic aspects of formal power series}.
\newblock Texts and monographs in computer science. Springer, 1978.

\bibitem[Str52]{Strahler52}
A.~N. Strahler.
\newblock Hypsometric (area-altitude) analysis of erosional topology.
\newblock {\em Geological Society of America Bulletin}, 63(11):1117--1142,
  1952.

\bibitem[Tha67]{DBLP:journals/jcss/Thatcher67}
J.~W. Thatcher.
\newblock Characterizing derivation trees of context-free grammars through a
  generalization of finite automata theory.
\newblock {\em J. Comput. Syst. Sci.}, 1(4):317--322, 1967.

\bibitem[Ynt67]{Yntema:inclusion-relations}
M.K. Yntema.
\newblock Inclusion relations among families of context-free languages.
\newblock {\em Information and Control}, 10:572--597, 1967.

\end{thebibliography}

\newpage

\appendix
\section{Missing proofs}

\newcommand{\diml}{\textsf{dlen}}
\newcommand{\dimc}{\textsf{dchar}}
\subsection*{Proof of Lemma \ref{lem:unf}}
Let $t$ be a derivation tree of dimension $\dim(t)=d$. Then $t=\sigma t_1\ldots t_r$ has at most one child $t_c$ ($c\in[r]$) with $\dim(t)=\dim(t_c)$ by definition of $\dim$. Hence, there is a unique maximal path $v_0\ldots v_l$ starting in $v_1=\ew$
such that (i) $\dim(t)=\dim(t|_{v_l})$ and (ii) either $v_l$ is a leaf of $t$ or every proper subtree of $v_l$ has dimension less than $d$.
Let $\diml(t)=l$ denote the length of this unique path. 
Further, we use $\dimc(t) = \{ (i,\dim(t'_i)) \mid i \in [r'] \text{ for } t|_{v_l} = \sigma' t'_1 \ldots t'_{r'} \}$ to remember the dimensions of the children of $t|_{v_l}$. ($\dimc(t)=\emptyset$ if $v_l$ is a leaf of $t$.)

We first construct a mapping $\hat{\cdot}$ from the derivation trees of $G^{\dle{k}}$ to the derivation trees of $G$ of dimension at most $d$ and exactly $d$, respectively: 
\begin{itemize}
\item If $t=\sigma_{X^\dle{d},X^\deq{e}} t_1$, then $\widehat{t} := \widehat{t_1}$.
\item If $t=\sigma_{X^\deq{d},u_0 Z_1 u_1 \ldots u_{r-1} Z_{r} u_r} t_1 \ldots t_r$, then $\widehat{t} := \sigma_{X,u_0 X_1 u_1 \ldots u_{r-1} X_{r} u_r} \widehat{t_1} \ldots \widehat{t_r}$ where $X_i\in\vars$ is the variable from which $Z_i\in\vars^{\dle{k}}$ was derived.
\end{itemize}
Informally, $\hat{\cdot}$ contracts edges induced by rules $X^{\dle{d}}\to X^\deq{e}$ which choose a concrete dimension $e\le d$,
and then forgets the superscripts. By definition, the rules of $G^{\dle{k}}$ which rewrite the variable $X^{\deq{d}}$ are obtained from the rules of $G$ which rewrite the variable $X$ by only adding superscripts. Hence, $\hat{\cdot}$ maps any $X^{\dle{d}}$-tree and any $X^{\deq{d}}$-tree to a $X$-tree while preserving its yield ($\Y(t)=\Y(\hat{t})$). Further, as the edges induced by the rules $X^{\dle{d}}\to X^{\deq{e}}$ do not influence the tree dimension, we also have $\dim(t) = \dim(\hat{t})$ and $\dimc(t)=\dimc(\hat{t})$. We also have $\diml(t)\ge \diml(\hat{t})$ as contracting the edges induced by $X^{\dle{d}}\to X^{\deq{e}}$ can only reduce $\diml(\cdot)$.

We claim  that $\hat{\cdot}$ maps the set of $X^{\dle{d}}$-trees ($X^{\deq{d}}$) one-to-one onto the set of $X$-trees of dimension at most $d$ (exactly $d$). We proceed by induction on $d$. Let $d=0$.
  \begin{itemize}
  \item 
  Consider a $X^{\deq{0}}$-tree $t$. The only rules rewriting $X^{\deq{0}}$ are of the form $X^{\deq{0}}\to u$ or $X^{\deq{0}}\to u Y^{\deq{0}} v$ (for $u,v\in\al^\ast$ and $Y\in\vars$). For these rules, forgetting the superscript is an injective operation. Hence, $\hat{\cdot}$ is injective on the set of $X^{\deq{0}}$-trees. Obviously, $t$ is also a chain, and, thus, $0=\dim(t)=\dim(\hat{t})$. (In fact, $\diml(t)=\diml(\hat{t})$.)
  
  Consider now a $X^{\deq{0}}$-tree $t$. By definition of $G^{\dle{k}}$, $X^{\dle{0}}$ can only be rewritten to $X^{\deq{0}}$. So $t=\sigma_{X^{\dle{0}},X^{\deq{0}}} t_1$ for $t_1$ a $X^{\deq{0}}$-tree, and $\hat{t}=\hat{t}_1$. Again, $0=\dim(t)=\dim(\hat{t})$.
  \item
  Let $t$ be a $X^{\deq{d}}$-tree for $d>0$ where $t= \sigma_{X^{\deq{d}},u_0 Z_1 u_1 \ldots u_{r-1} Z_r u_r} t_1\ldots t_r$ for some $r>0$
  where there is a rule $X\to u_0 X_1 u_1 \ldots u_{r-1} X_r u_r$ in $G$ ($X_i\in\vars$, $u_i\in\al^\ast$) such that for all $i\in[r]$ either $Z_i \in \{ X_i^{\deq{d}}, X_i^{\dle{d-1}}\}$ or $Z_i \in \{ X_i^{\deq{d-1}}, X_i^{\dle{d-2}}\}$.
  
  Assume first that $t$ has no $Y^{\dle{d}}$-subtree for any $Y\in\vars$, i.e.\ $t$ is a $X^{\deq{d}}$-tree of minimal height. Then $\sigma=\sigma_{X^{\deq{d}},u_0 Z_1\ldots Z_r u_r}$ where $Z_i = X_i^{\deq{d-1}}$ or, if $d\ge 2$, $Z_i = X_i^{\dle{d-2}}$ for some $X_i\in\vars$ such that $X\to u_0 X_1\ldots X_r u_r$ in $G$. 
  Inductively, we already know that $\dim(t') = e$ ($\dim(t')\le e$) for every $X^{\deq{e}}$-tree ($X^{\dle{e}}$-tree)
  and all $e<d$. Hence, $\dim(t)=\dim(\hat{d})=d$ and $\diml(t)=\diml(\hat{t})=0$.
  
  Thus, assume that $t$ contains a $Y^{\deq{d}}$-subtree for some $Y\in\vars$. By construction, there occurs at most one ``$\deq{d}$-variable'', i.e.\ a variable of $\{ Y^{\deq{d}} \mid Y\in\vars\}$, in the right-hand side $\gamma$ of every rule $X^{\deq{d}}\to \gamma$. 
  By construction, there is a unique $j\in[r]$ such that $Z_j = X_j^{\deq{d}}$, while $Z_i = X_i^{\dle{d-1}}$ for all $i\in [r]-\{j\}$.
  Then the $X_j^{\deq{d}}$-tree $t_j$ has height less than $t$, so by induction on the height of $X^{\deq{d}}$-trees, we have $\dim(t_j)=\dim(\hat{t}_j)=d$ and $\diml(t_j)=\diml(\hat{t}_j)$.
  By induction on $d$, we also know that $\dim(t_i) < d$ ($i\in [r]-\{j\}$). Hence, $\dim(t)=d$ and $\diml(t)=\diml(t_j)+1$. As the edge to $t_j$ is not contracted by $\hat{\cdot}$, also $\diml(t)=\diml(t_j)+1=\diml(\hat{t}_j)+1=\diml(\hat{t})$.
  
  Assume now that $\hat{t}=\hat{t}'$ for two $X^{\deq{d}}$-trees $t,t'$. Then $\dim(t)=\dim(t')=\dim(\hat{t})$, $\diml(t)=\diml(t')=\dim(\hat{t})$, and $\dimc(t)=\dimc(t')=\dimc(\hat{t})$. 
  Let $\hat{t}=\sigma_{X,u_0 X_1 u_1 \ldots u_{r-1} X_r u_r}$.
  Then necessarily, $t=\sigma_{X^{\deq{d}},u_0 Z_1 \ldots Z_r u_r}$ and $t' = \sigma_{X^{\deq{d}},u_0 Z_1' \ldots Z'_r u_r}$ 
  with either $Z_i \in \{ X_i^{\deq{d}}, X_i^{\dle{d-1}}\}$ or $Z_i \in \{ X_i^{\deq{d-1}}, X_i^{\dle{d-2}}\}$, and, analogously, for all $Z'_i$.
  as $\hat{\cdot}$
  only forgets superscripts and removes $\sigma_{X^{\dle{d}},X^{\deq{e}}}$.
  
  If $\diml(\hat{t})=0$, then $t,t',\hat{t}$ have only subtrees of dimension at most $d-1$. By definition of $G^{\dle{k}}$, 
  it follows that $Z_i,Z_i' \in \{ X_i^{\deq{d-1}}, X_i^{\dle{d-2}}\}$.
  By induction, we know that only $\deq{d-1}$-variables can generate trees of dimension $d-1$, hence, necessarily $Z_i=Z'_i=X_i^{\deq{d-1}}$ for all children $i\in[r]$ of $\hat{t}$ which have dimension exactly $d-1$, while $Z_i = Z'_i = X_i^{\dle{d-2}}$ for all remaining children.
  Again by induction, we know that $\hat{\cdot}$ is injective on sets of $Y^{\dle{d-2}}$-trees and $Y^{\deq{d-1}}$-trees, respectively.
  Hence, $t=t'$.
  
  Finally, assume $\diml(\hat{t})> 0$. Then $\hat{t}$ has a unique child $t|_j$ of dimension $d$, while $\dim(t|_i)< d$ for $j\in [r]-\{i\}$.   Consequently, $Z_j = Z'_j = X_j^{\deq{d}}$ and $Z_j=Z_j'=X_j^{\dle{d-1}}$ for $j\in[r]-\{i\}$ by definition of $G^{\dle{k}}$. By induction on $d$ and $\diml(t)$, we may assume that $\hat{\cdot}$ is injective on the subtrees of $t$ and $t'$, hence, $t=t'$ follows.
  \end{itemize}
It remains to show that for any $X$-tree $t'$ of dimension exactly $d$ (at most $d$), there is a $X^{\deq{d}}$-tree ($X^{\dle{d}}$-tree) $t$
such that $\hat{t}=t'$. To this end, we define an operator $\check{\cdot}$ which maps a $X$-tree of dimension exactly $d$ to a $X^{\deq{d}}$-tree by, essentially, introducing the superscripts into a symbol $\sigma_{X,u_0 X_1\ldots X_r u_r}$ as required by the dimensions of the subtrees $t_1,\ldots,t_r$:
\begin{quote}
Let $t=\sigma_{X,u_0 X_1 u_1 \ldots X_r u_r} t_1 \ldots t_r$ with $d=\dim(t)$ and $d_i = \dim(t_i)$, then 
\begin{equation*}
\check{t} := \sigma_{X^{\dle{k}},X^{\deq{d}}} \sigma_{X^\deq{d},u_0 Z_1 u_1 \ldots Z_r u_r} t'_1 \ldots t'_r.
\end{equation*}
where $Z_i,t'_i$ are defined as follows:
\begin{itemize}
\item
If $d > \max_{i\in[r]} d_i$, then let $J = \{ i \in [r] \mid d_i = d-1\}$ and set $Z_i := X_i^{\deq{d-1}}$ and $t'_i := \check{t_i}$ if $i\in J$, and $Z_i := X_i^{\dle{d-2}}$ and $t'_i := \sigma_{X^{\dle{d-2}},X^{\deq{d_i}}} \check{t_i}$ otherwise.
\item If $d = \max_{i\in[r]} d_i$, then there is a unique $j\in [r]$ such that $d_j = d$.
Set $Z_j = X_j^{\deq{d}}$ and $t'_j := \check{t}_j$. For the remaining $i\in [r]-\{j\}$, set $Z_i := X_i^{\dle{d-1}}$ and $t'_i := \sigma_{X_i^{\dle{d-1}},X_i^\deq{d_i}} \check{t}_i$.
\end{itemize}
\end{quote}
It is straightforward to check that $\check{t}$ is indeed a $X^{\deq{d}}$-tree for $\dim(t)=d$, and that $\hat{\check{t}}=t$.
Obviously, $\check{\cdot}$ is injective.
Finally, for every $d'\ge d$ there is exactly one rule $X^{\dle{d'}}\to X^{\deq{d}}$. Hence, $\sigma_{X^{\dle{d'}},X^{\deq{d}}}\check{t}$ is, by definition of $G^{\dle{k}}$, the unique $X^{\dle{d'}}$-tree which is mapped by $\widehat{\cdot}$ back to $t$.

\subsection*{Proof of Lemma \ref{lem:id}}
The proofs are straightforward, and essentially only require to unroll and cut off the power series underlying the Kleene star using the $\omega$-continuity of the Kleene star and the assumption that $k=k+1$.
We several times make use of the trivial bound  
${ a \choose b } \ge a$ for $0 < b < a$.
on the binomial coefficient.
\begin{itemize}
\item[$\mathsf{(I1)}$] $kx = k\supp(x)$ is obviously true modulo $k=k+1$. 

\item[$\mathsf{(I2)}$] $(\gamma {x})^\ast = (\gamma{x})^{< \lceil \log_{\gamma} k\rceil} + k \cdot {x}^{\lceil \log_\gamma k\rceil} {x}^\ast$

This follows from the $\omega$-continuity of the star
$(\gamma x)^\ast = \sum_{n\in\N} (\gamma x)^n$
and the first identity.
\item[$\mathsf{(I3)}$] $(x^\ast)^\ast = k x^\ast$

Choose any $w\in \supp( (x^\ast)^\ast)$. Then $w$ can be factorized into
$w=u_1\ldots u_l$ with $u_i \in \supp(x^\ast)$, i.e., $w\in \supp( (x^\ast)^l)$. Obviously, we then can also find a factorization of $w$ into $l+i$ words for any $i>0$ as we may add an arbitrary number of neutral elements $\ew$ into this factorization. Hence, $w\in \supp((x^\ast)^{l+i})$ for all $i\ge 0$. So, the coefficient of $w$ in $(x^\ast)^\ast$ is $\infty = k$ modulo $k=k+1$.
\item[$\mathsf{(I4)}$]
$({x}+{y})^\ast =  ({x}+{y})^{< k} + {x}^k {x}^\ast + {y}^k {y}^\ast + k {x} {y} ({x}+{y})^{\max(k-2,0)} {x}^\ast {y}^\ast$

Proof:
\begin{equation*}
\begin{array}{l@{\hspace{0.5cm}}cl}
         &   & (x+y)^\ast\\[0.2cm]
         & = & (x+y)^{<k} + \sum_{n\ge k}(x+y)^n \\[0.2cm]
(xy=yx)  & = & (x+y)^{<k} + \sum_{n\ge k} \sum_{j=0}^n { n\choose j} x^j y^{n-j} \\[0.2cm]
         & = & (x+y)^{<k} + \sum_{n\ge k} x^n + y^n + \sum_{j=1}^{n-1} { n\choose j} x^j y^{n-j}) \\[0.2cm]
         & = & (x+y)^{<k} + x^k x^\ast + y^k y^\ast
          +   \sum_{n\ge k} \sum_{j=1}^{n-1} { n\choose j} x^j y^{n-j} \\[0.2cm]
(j=i+1,n=m+2)
         & = & (x+y)^{<k} + x^k x^\ast + y^k y^\ast\\
         & + &  \sum_{m\ge \max(k-2,0)} \sum_{i=0}^{m} { m+2\choose i+1} x^{i+1} y^{m-i+1} \\[0.2cm]         
({m+2\choose i+1}\ge k,\mathsf{(I1)})         
         & = & (x+y)^{<k} + x^k x^\ast + y^k y^\ast\\
         & + &  k xy \sum_{m\ge \max(k-2,0)} \sum_{i=0}^{m} { m\choose i} x^{i} y^{m-i} \\[0.2cm]         
         & = & (x+y)^{<k} + x^k x^\ast + y^k y^\ast\\
         & + &  k xy (x+y)^{\max(k-2,0)} (x+y)^\ast\\[0.2cm]                  
((ii)\ \supp((x+y)^\ast)=\supp(x^\ast y^\ast), \mathsf{(I1)})
         & = & (x+y)^{<k} + x^k x^\ast + y^k y^\ast\\
         & + &  k xy (x+y)^{\max(k-2,0)} x^\ast y^\ast\\[0.2cm]                  
\end{array}         
\end{equation*}
\item[$\mathsf{(I5)}$]
\begin{equation*}
\begin{array}{lcl}
({x}{y}^\ast)^\ast & = & 1 + xy^\ast + x^2 x^\ast
                                             + x^2 y \sum_{0\le m,j < k-2} {2+m+j\choose 1+j} x^m y^j\\[0.2cm]
                                             & + & k x^2 y x^{\max(k-2,0)} x^\ast y^\ast
                       + k x^2 y x^\ast y^{\max(k-2,0)} y^\ast
\end{array}
\end{equation*}
Proof:
\begin{equation*}
\begin{array}{l@{\hspace{0.5cm}}cl}
                       &   & (xy^\ast)^\ast\\[0.2cm]
(xy^\ast = y^\ast x)   & = & \sum_{n\in\N} x^n (y^\ast)^n\\[0.2cm]
                       & = & 1 + xy^\ast\\
                       & + & \sum_{n\ge 2} x^n \sum_{l\ge 0} { n+l-1 \choose l} y^l\\[0.2cm]
                       & = & 1 + xy^\ast + x^2 x^\ast\\
                       & + & \sum_{n\ge 2, l\ge 1} {n+l-1\choose l} x^n y^l\\[0.2cm]
(n=m+2,l=j+1,xy=yx)    & = & 1 + xy^\ast + x^2 x^\ast\\
                       & + & x^2 y \sum_{m\ge 0, j\ge 0} {2+m+j\choose 1+j} x^m y^j\\[0.2cm]
                       & = & 1 + xy^\ast + x^2 x^\ast\\
                       & + & x^2 y \sum_{\substack{m,j\ge 0\\ m\ge k-2 \vee j \ge k-2}} {2+m+j\choose 1+j} x^m y^j\\
                       & + & x^2 y \sum_{0\le m,j < k-2} {2+m+j\choose 1+j} x^m y^j\\[0.2cm]
(k=k+1)                & = & 1 + xy^\ast + x^2 x^\ast\\
                       & + & k x^2 y \sum_{\substack{m,j\ge 0\\ m\ge k-2 \vee j \ge k-2}} x^m y^j\\
                       & + & x^2 y \sum_{0\le m,j < k-2} {2+m+j\choose 1+j} x^m y^j\\[0.2cm]
(I1   )                & = & 1 + xy^\ast + x^2 x^\ast\\
                       & + & k x^2 y x^{\max(k-2,0)} x^\ast y^\ast\\
                       & + & k x^2 y x^\ast y^{\max(k-2,0)} y^\ast\\
                       & + & x^2 y \sum_{0\le m,j < k-2} {2+m+j\choose 1+j} x^m y^j\\[0.2cm]                       
\end{array}         
\end{equation*}
\end{itemize}

\subsection*{Proof of Theorem~\ref{thm:semilin}}

We  identify a word $w\in\al^\ast$ with its Parikh vector $\pa(w)\in\N^\al$. 
We show that, if $\supp(w_1^\ast \ldots w_l^\ast)\neq w_1^\ast \ldots w_l^\ast$ in $\N_k\aab{\N^\al}$,
then we can split the linear term in a finite sum of weighted linear terms where in each linear term with weight less than $k$ the number of Kleene stars is strictly less than $l$. Then the result follows inductively. 

W.l.o.g.\ we may assume that each $w_i\neq \ew$, i.e.\ $\pa(w_i)\neq\vecn{0}$, as $\ew^\ast = \infty = k$.
Denote by $M\in \N^{\al\times l}$ the matrix whose $i$-th row is given by $\pa(w_i)$ (w.r.t.\ some chosen order on $\al$),
and let $\vecn{\lambda}=(\lambda_1,\ldots,\lambda_l)\in\N^l$.
Then the coefficient $c_{\vv} := (w_1^\ast \ldots w_l^\ast, \vv)$ is exactly the number of solutions over $\N^l$ of the linear equation $\vv = \vecn{\lambda} M$.
If the set $\{\pa(w_1),\pa(w_2),\ldots,\pa(w_l)\}$ is linearly independent, then trivially $c_{\vv} \le 1$ and we are done.

Assume thus that the set $\{\pa(w_1),\pa(w_2),\ldots,\pa(w_l)\}$ is linearly dependent, i.e.\
there is some kernel vector $\vecn{n}=(n_1,\ldots,n_l)\in\Z^l\setminus\{\vecn{0}\}$.
Let $I_{+} = \{ i \in [l] \mid n_i > 0\}$, $I_{-} = \{ i \in [l] \mid n_{i} < 0\}$, and $I_0 = \{ i\in [l] \mid n_{i} = 0\}$.
As all components of $M$ are nonnegative, $\vecn{n}$ necessarily has a positive and a negative component, i.e.\ $I_+ \neq \emptyset \neq I_{-}$. 
Let $\norm{n}_{\infty}:=\max_{i\in[l]} \abs{n_i}$ and $C:= \norm{n}_{\infty} \cdot (k-1)$.

Consider now any $\vecn{\lambda}=(\lambda_1,\ldots,\lambda_l)\in \N^l$ with $\lambda_i > C$ for all $i\in I_{+}$. 
Then also $\vecn{\lambda}-i\vecn{n} \in \N^l$ for $i=0,\ldots,k-1$ and trivially $\vv = \vecn{\lambda} M = (\vecn{\lambda}-i\vecn{n})M$ which implies that $c_{\vv} \ge k$.
If $\lambda_i > C$ for all $i\in I_-$, consider analogously $\vecn{\lambda}+i\vecn{n}$.
For $I\in\{ I_+,I_-\}$ we split the series $\prod_{i\in I} w_i^\ast$ into series $\mf{s}_I$ and $\mf{t}_I$ defined by
\begin{equation*}
\mf{s}_I := \prod_{i\in I} (w_i^C w_i^\ast)\, \text{ and }\, \mf{t}_{I} := \sum_{\emptyset\neq J \subseteq I} \prod_{i\in J} w_i^{<C} \prod_{i\in I-J} (w_i^C w_i^\ast)
\end{equation*}
As discussed above, all positive coefficients of $\mf{s}=\prod_{i\in I} (w_i^C w_i^\ast)$ (for $I\in \{I_+,I_-\}$) are greater than or equal to $k$. Hence  $\mf{s}_I = k \supp(\mf{s}_I)$ over $\N_k\aab{\N^\al}$.
\begin{equation*}
\begin{array}{cl}
  & w_1^\ast w_2^\ast \ldots w_l^\ast \\[0.2cm]
= & \displaystyle \prod_{i\in I_0} w_i^\ast (\mf{s}_{I_+} + \mf{t}_+) (\mf{s}_{I_-} + \mf{t}_{I_-})\\[0.2cm]
= & \displaystyle \prod_{i\in I_0} w_i^\ast (k\mf{s}_{I_+} + \mf{t}_+) (k\mf{s}_{I_-} + \mf{t}_{I_-})\\[0.2cm]
= & \displaystyle \prod_{i\in I_0} w_i^\ast \left( \mf{t}_{I_+} \mf{t}_{I_-} + k(\mf{t}_{I_+} \mf{s}_{I_-} + \mf{t}_{I_-} \mf{s}_{I_+} + \mf{s}_{I_-} \mf{s}_{I_+}) \right)\\[0.2cm]
= & \displaystyle \prod_{i\in I_0} w_i^\ast \left( \mf{t}_{I_+} \mf{t}_{I_-} + k(\mf{t}_{I_+} \mf{s}_{I_-} + \mf{t}_{I_-} \mf{s}_{I_+} + 2\mf{s}_{I_-} \mf{s}_{I_+}) \right)\\[0.2cm]
= & \displaystyle \prod_{i\in I_0} w_i^\ast \left( \mf{t}_{I_+} \mf{t}_{I_-} + k \mf{s}_{I_+} \prod_{i\in I_-} w_i^\ast + k \mf{s}_{I-} \prod_{i\in I_+} w_i^\ast \right)\\[0.2cm]
= & \displaystyle \prod_{i\in I_0} w_i^\ast \left( \mf{t}_{I_+} \mf{t}_{I_-} + k \left( \prod_{i\in I_+} w_i^C + \prod_{i\in I_-} w_i^C\right) \prod_{i\in I_+ \cup I_-} w_i^\ast \right)\\[0.2cm]
= & \displaystyle k \left( \prod_{i\in I_+} w_i^C + \prod_{i\in I_-} w_i^C\right) \prod_{i\in [l]} w_i^\ast  +  \mf{t}_{I_+} \mf{t}_{I_-} \prod_{i\in I_0} w_i^\ast\\[0.2cm]
\end{array}
\end{equation*}
It remains to consider the second summand which can be written as a finite sum of products of which each contains at most $\abs{[l] - (J_+ \cup J_-)} \le l-2$ Kleene stars:
\begin{equation*}
\mf{t}_{I_+} \mf{t}_{I_-} \prod_{i\in I_0} w_i^\ast = \sum_{\substack{\emptyset \neq J_+ \subseteq I_+\\ \emptyset\neq J_{-} \subseteq I_{-}}} \prod_{i\in J_{+}\cup J_{-}} w_i^{<C} \prod_{i\in (I_+-J_+)\cup (I_- - J_-)} w_i^C \prod_{i\in [l] - (J_+ \cup J_-)} w_i^\ast.
\end{equation*}

\subsection*{Proof of Corollary~\ref{cor:presb}}
As $\pa(L(G,X))=\supp( \camb_{G,X} ) = \{ \vv \in \N^\al \mid \camb_{G,X}(\vv) > 0\}$ is semilinear by Parikh's theorem, it is effectively representable by a formula of Presburger arithmetic, and so is its complement ($k=0$).

Assume thus $1 \le k < \infty$ and let $K=k+1$. Then we may compute from $\camb_{X^{\dle{n+\log\log K}}}$ a weighted semilinear representation of $\camb_{X}$ modulo $K=K+1$:
\begin{equation*}
\camb_{X} = \sum_{i=1}^r \gamma_i \supp( \vv_{i,0} \vv_{i,1}^\ast \ldots \vv_{i,l_i}^\ast )\ \text{ with $\gamma_i\in \N_K$ and $\vv_{i,j} \in \N^\al$}.
\end{equation*}
From each term $\supp( \vv_{i,0} \vv_{i,1}^\ast \ldots \vv_{i,l_i}^\ast )$ we can construct an equivalent Presburger formula $F_i$.
Then $\camb_{X}(\vv) = k$ if and only if 
\begin{equation*}
\vv \models \exists y_1,\ldots,y_r\colon \sum_{i=1}^r \gamma_i y_i = k \wedge \bigwedge_{i=1}^l (F_i(\vv)\rightarrow y_i = 1 \wedge \neg F_i(\vv) \rightarrow y_i = 0 ).
\end{equation*}
Finally, let $k=\infty$. As for any $\vv\in\N^\al$ there are only finitely many $w\in\al^\ast$ with $\pa(w)=\vv$,
we have $\camb_{G,X}(\vv)=\infty$ if and only if there is a $w\in\al^\ast$ with $\pa(w)=\vv$ and $\amb_{G,X}(w)=\infty$.
We therefore construct from $G=(\vars,\al,P)$ a context-free grammar $G'=(\vars',\al,P')$ with $\vars\subseteq \vars'$ such that $L(G',X) = \{ w\in\al^\ast \mid \amb_{G,X}(w) = \infty\}$. Then $\{ \vv \in \N^\al\mid \camb_{G,X}(\vv)=\infty \} = \pa(L(G',X))$ and is a semilinear set by Parikh's theorem where the corresponding Presburger formula is again effectively constructible.

We discuss the construction of $G'$ for the sake of completeness: we have $\amb_{G,X}(w)=\infty$ if and only if there are infinitely many $X$-trees $t$ with $\Y(t)=w$.
In particular, for every $h\in\N$ we can find a $X$-tree $t$ of height at least $h$ with $\Y(t)$, as there are only finitely many $X$-trees of bounded height. For instance, choose $h\ge (\abs{w}+1)\abs{\vars}$ and consider a maximal path $v_0\ldots v_h$ from the root of such a $t$ to a leaf. For all $i=0\ldots h$ assume $t|_{v_i}$ is a $X_i$-tree ($X=X_0$). This path then corresponds to a derivation of the form 
\begin{equation*}
X=X_0 \Rightarrow^+ u_0 X_1 v_0 \Rightarrow^+ \ldots \Rightarrow^+ u_0 \ldots u_{h-1} X_h v_{h-1} \ldots v_0 \Rightarrow u_1 \ldots u_{h-1} u_{h}v_{h} v_{h-1} \ldots v_1 = w
\end{equation*}
for suitable $u_i,v_i\in\al^\ast$. In the sequence $X_0,X_1,\ldots,X_h$ color $X_i$ black if $\abs{u_iv_i}=0$; otherwise color $X_i$ red.
Then there are at most $\abs{w}$ red variables in this sequence.
In particular, there is a subsequence $X_{i},X_{i+1},\ldots,X_{i+\abs{\vars}}$ consisting of $1+\abs{\vars}$ consecutive black variables, as otherwise $h+1\le (\abs{w}+1)\abs{\vars}$. Hence, the derivation contains a cyclic derivation $Y\Rightarrow^+ Y$.

Therefore compute the set $\vars_C = \{ X\in\vars \mid X\Rightarrow_G^+ X\}$ of cyclic variables as usual, and define $G'$ such that a derivation can only terminate in a word if the derivation visits at least one cyclic variable:
\begin{itemize}
\item Set $\vars' = \{ X,X' \mid X\in\vars\}$ with the intended meaning that an unprimed variable still has to be derived into a sentential form containing at least one cyclic variable $Y\in\vars_C$.
\item Construct $P'$ as follows:
  \begin{itemize}
  \item
  If $X\to_{G} u_0$ for $u_0\in\al^\ast$, then $X'\to_{G'} u_0$.
  \item
  If $X\to_{G} u_0 X_1 u_1 X_2 u_2 \ldots u_{r-1} X_r u_r$ for $r>0$ and $u_i\in \al^\ast$,
  then
  \begin{equation*}
  X' \to_{G'} u_0 X'_1 u_1 X'_2 u_2 \ldots u_{r-1} X'_r u_r
  \end{equation*}
  and
  \begin{equation*}
  \begin{array}{lcl}
  X &\to_{G'}& u_0 X_1 u_1 X'_2 u_2 \ldots u_{r-1} X'_r u_r\\
  X &\to_{G'}& u_0 X'_1 u_1 X_2 u_2\ldots u_{r-1} X'_r u_r\\
   &\vdots& \\
  X &\to_{G'}& u_0 X'_1 u_1 X'_2 u_2 \ldots u_{r-1} X_r u_r
  \end{array}
  \end{equation*}
  \item
  If $X\in \vars_C$, then $X\to_{G'} X'$.
  \end{itemize}
\end{itemize}
By construction, an unprimed variable $Y$ can only be rewritten to a sentential form containing exactly one unprimed variable, except $Y$ is cylic in $G$, in which case the rule $Y\to_{G'} Y'$ can also be applied.

Then $w\in L(G',X)$ if and only if there is a derivation $X\Rightarrow_{G'}^+ u Y v \Rightarrow_{G'} u Y' v \Rightarrow^+_{G'} w$,
as only primed variables can be rewritten to terminal words. By construction, this is equivalent to $X\Rightarrow_{G}^+ u Y v \Rightarrow_{G}^+ w$ and $Y\in\vars_C$, which in turn is equivalent to $\amb_{G,X}(w)=\infty$.

\end{document}